\newtheoremstyle{mythm}{3pt}{3pt}{}{16pt}{\bfseries}{:}{.5em}{}
\theoremstyle{mythm}
\newtheorem{theorem}{Theorem}
\newtheorem{example}{Example}
\newtheorem{definition}{Definition}
\newtheorem{remark}{Remark}
\newtheorem{corollary}{Corollary}
\newtheorem{lemma}{Lemma}
\newcommand{\INPUT}{\item[\algorithmicrequire]}
\renewcommand{\algorithmicrequire}{\textbf{Input:}}
\begin{document}
\title{
Optimal Locally Repairable Systematic Codes Based on Packings
\author{ Han Cai, Minquan Cheng, Cuiling Fan, and Xiaohu Tang, \IEEEmembership{Member,~IEEE}}
\thanks{H. Cai and X. H. Tang are with the Information Security and National Computing Grid Laboratory,
Southwest Jiaotong University, Chengdu, 610031, China (e-mail: hancai@aliyun.com, xhutang@swjtu.edu.cn).}
\thanks{M. Q. Cheng is with the Guangxi Key Lab of Multi-source Information Mining \& Security, Guangxi Normal University, Guilin 541004, China (E-mail:
chengqinshi@hotmail.com).
}
\thanks{C. L. Fan is with
the School of Mathematics, Southwest Jiaotong University,
Chengdu, 610031, China (e-mail: cuilingfan@163.com).}
}
\date{}
\maketitle

\vspace{0.1in}
\begin{abstract}
Locally repairable codes are desirable for distributed storage systems to improve the repair efficiency. In this paper, we first build a bridge between  locally repairable code and  packing. As an application of this bridge, some optimal locally repairable codes can be obtained by packings, which gives optimal locally repairable codes with flexible parameters.
\end{abstract}

\begin{keywords}Distributed storage, locally repairable code, packing, update-efficiency.
\end{keywords}

\section{Introduction}
Right now, large-scale cloud storage and distributed file systems
such as Amazon Elastic Block Store (EBS) and
Google File System (GoogleFS) have reached such a massive scale that the disk failures are
the norm and not the exception. In these systems, to protect the data from disk failures, the simplest solution is the
straightforward replication of data packets across different disks. In addition to fault tolerance,
replication has good parallel reading ability \cite{RPDV} that a content requested by multiple users can be directed to
different replicas, which are very important for those hot data  needed to be read frequently.
However, unfortunately this solution suffers larger storage overhead. Accordingly, an alternative solution
based on storage codes was proposed.

In storage system, an $[n,k]$ storage code encodes $k$ information symbols to $n$ symbols and stores them
across $n$ disks. Generally speaking, among all the storage codes, maximum distance separable (MDS) code
is preferred for the practical systems because
it can lead to dramatic improvements both in terms of redundancy and reliability compared with replication \cite{GHSY}.
Nevertheless, an $[n,k]$ MDS code has a drawback that whenever recovering a symbol one needs to connect $k$ surviving symbols. This is expensive especially in large-scale distributed file systems. To overcome this drawback,  locally repairable code was introduced to reduce the number of symbols connected during the repair process \cite{HCL}.

The concept of locally repairable codes was initially studied in \cite{HCL}, where a symbol can be recovered from accessing only other
$r\ll k$ symbols. Later, this concept was generalized to the case that even if multiple disk failures occur, the failed node can still be recovered locally \cite{PKLK}. In 2014, a new kind of locally repairable codes was proposed by Wang \textit{et al.} from a combinatorial perspective, which also has the ability to recover multiple disk failures locally \cite{WZ}. Specifically, this code has a  property that it can ensure every information symbol with disjoint local repair groups, each of which can be used to reconstruct the target information symbol locally. Consequently, it has the advantage of good parallel reading ability since each repair group can be seen as a backup for the  target information symbol and then
can be accessed independently \cite{RPDV}. In addition,  the locally repairable code with multiple repair groups can have higher code rate  in some special cases  in contrast to  the one in \cite{PKLK}. Meanwhile, some upper bounds on the minimum Hamming distance of locally repairable codes were derived, such as the Singleton-type bound in \cite{GHSY,PD,PKLK}, the bound depending on the size for alphabet \cite{CM}, the bound for locally repairable codes with multiple erasure tolerance \cite{RPDV,WZ}, etc. Up to now, Numerous constructions of optimal locally repairable codes with respect to those bounds have been reported in the literature,  e.g., see \cite{CHL,FY,GHJY,GHSY,HCL,PHO,PD,PKLK,RKSV,RPDV,SDYL,TPD,TB,WZ}, and the references therein.

Very recently, Rawat \textit{et al.} generalized  the locally repairable code with each information symbol having multiple disjoint repair groups to the nonlinear case \cite{RPDV}.
In particular, in \cite{RPDV}, Rawat \textit{et al.} { derived} an upper bound on the minimum Hamming distance of a specific class of such codes, in which each repair group contains exactly one check symbol. So far, there are only two optimal constructions with respect to this bound: One is based on resolvable designs \cite{RPDV}; Another is via partial geometry \cite{PHO}. However, the constraints for both the resolvable design and the partial geometry are so strong that only a few results are known.

Besides, in storage system the data itself may change frequently, especially for the hot data, which requires us to improve the
update-efficiency \cite{MCW}. That is, the optimal update-efficiency is also very desirable in the practical systems.
Therefore, in this paper we focus on the locally repairable codes with optimal update-efficiency and each information symbol having multiple disjoint repair groups, where each repair group contains exactly one check symbol. Firstly, we combinatorially characterize  optimal locally repairable codes with respect to the bound in \cite{RPDV} via packing, which is a simple well studied combinatorial structure (e.g., see \cite{CFM,CK,CY,CSW,MOKL,Y,YonJCTA}). Secondly some general constructions of optimal locally repairable codes with
optimal update-efficiency are presented based on packings.  In particular, sufficient and necessary conditions for optimal locally repairable code are obtained for some special cases.

The remainder of this paper is organized as follows. Section \ref{sec-preliminaries}
introduces some preliminaries about locally repairable codes.
Section \ref{sec-relationship} { proposes a combinatorial characterization about locally repairable codes.}
Sections \ref{sec-construction}  presents general constructions of locally repairable codes { based on packings}. In the meantime  some packings that can be used to generate optimal locally repairable codes are proposed.
Section \ref{sec-conclusion} concludes
this paper with some remarks.

\section{Preliminaries}\label{sec-preliminaries}

Throughout this paper, we use the following notations:
\begin{itemize}
\item For a positive integer $n$, let  $[n]$ denote the set $\{1,2,\cdots,n\}$;
\item For any prime power $q$, let $\mathbb{F}_q$ denote the finite field with $q$ elements;
\item Let $x = (x_1, \cdots , x_n)$ be a vector, and $supp(x) = \{i | x_i\ne 0, 1 \le i \le n\}$ denote its support;
\item An $[n,k]$ linear code $\mathcal{C}$ over $\mathbb{F}_q$  is a $k$-dimensional subspace of $\mathbb{F}_q^n$
yielded by a $k\times n$ generator matrix
$G=({\bf g}_1,{\bf g}_2,\ldots,{\bf g}_{n})$, where ${\bf g}_i$ is a column vector of dimension $k$ for all $1\le i\le n$. Specifically,
it is said to be an $[n,k,d]$ linear code if the minimum Hamming distance is $d$;
\item For a subset $S\subseteq [n]$, let $span(S)$ be the linear space
spanned by $\{{\bf g}_i |i\in  S\}$ over $\mathbb{F}_q$ and $rank(S)$ be the dimension of $span(S)$.
\end{itemize}

\subsection{Locally Repairable Codes}

The $i$th $(1 \leq i \leq n)$ code symbol $c_i$ of an $[n, k,d]_q$ linear code $\mathcal{C}$
is said to have locality   $r$ $(1 \leq r \leq k)$, if it can be recovered
by accessing at most $r$ other symbols in $\mathcal{C}$. More precisely,
symbol locality can also be defined in mathematical way as follows.

\begin{definition}[\cite{HCL}]\label{def_c_local}
For any column ${\bf g}_i$ of $G$ with $i\in [n]$, define Loc$({\bf g}_i)$ as the smallest integer $r$ such that there exists
$r$ integers $i_1,i_2,\cdots,i_r\in [n]\backslash\{i\}$ satisfying
$${\bf g}_i=\sum_{t=1}^{r}\lambda_t{\bf g}_{i_t}\ \ \ \ \lambda_t\in \mathbb{F}_q$$
and define Loc$(S)=\max\limits_{ i\in S}{\rm Loc}({\bf g}_i)$ for any set $S\subseteq[n]$.
Then, an $[n,k,d]_q$ linear code $\mathcal{C}$ is said to have information locality $r$ if there exists $S\subseteq [n]$ with $rank(S)=k$
satisfying ${\rm Loc}(S)\leq r.$
\end{definition}

\subsection{Locally repairable codes for multiple disk failures}

\begin{definition}[\cite{PKLK}]
The $i$th code symbol $c_i$, $1\leq i\leq n$, in an
$[n, k, d]_q$ linear code $\mathcal{C}$, is said to have $(r, \delta)_i$-locality  if
there exists a subset $S_i\subseteq
[n]$ such that
\begin{itemize}
  \item $i\in S_i$ and $|S_i|\leq r+\delta-1$; and
  \item the minimum distance of the code $\mathcal{C}|_{S_i}$ obtained by deleting code symbols $c_i$ ($i \in [n]\setminus S_i$) is at least $\delta$.
\end{itemize}
Further, an $[n,k,d]_q$ linear code $\mathcal{C}$ is said to have information $(r,\delta)_i$-locality if there exists $S\subseteq [n]$ with $rank(S)=k$
such that for each $i\in S$, the $i$th code symbol has $(r, \delta)_i$-locality.
\end{definition}

\begin{lemma}[\cite{PKLK}] \label{lemma_bound_i}
The minimum distance $d$ of a code $\mathcal{C}$ with $(r,\delta)_i$ locality is upper bounded by
\begin{equation}
d\leq n-k+1-\left(\left\lceil\frac{k}{r}\right\rceil-1\right)(\delta-1)
\end{equation}
\end{lemma}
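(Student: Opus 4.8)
The plan is to follow the standard two-step route for Singleton-type locality bounds: first reduce the claim to a statement about columns of a generator matrix, then greedily build a large coordinate set whose columns span a proper subspace of $\mathbb{F}_q^k$. I would begin by recording the elementary fact that for any $[n,k,d]_q$ linear code with generator matrix $G=({\bf g}_1,\dots,{\bf g}_n)$ and any $T\subseteq[n]$ with $rank(T)\le k-1$, one has $|T|\le n-d$: since $span(T)$ is a proper subspace of $\mathbb{F}_q^k$ there is a nonzero row vector $\mathbf{u}\in\mathbb{F}_q^k$ with $\mathbf{u}{\bf g}_i=0$ for all $i\in T$, so the codeword $\mathbf{u}G$ is nonzero (as $rank(G)=k$), vanishes on $T$, and hence has weight at most $n-|T|$, forcing $d\le n-|T|$. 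Thus it suffices to produce $T\subseteq[n]$ with
\[
rank(T)\le k-1\qquad\text{and}\qquad |T|\ \ge\ (k-1)+\left(\left\lceil\tfrac{k}{r}\right\rceil-1\right)(\delta-1),
\]
because substituting this into the inequality above gives exactly the stated bound.

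Next I would turn the locality hypothesis into rank language. Fix an information set $S$ with $rank(S)=k$ such that each $i\in S$ has $(r,\delta)_i$-locality through a repair group $S_i\ni i$ with $|S_i|\le r+\delta-1$ and $\mathcal{C}|_{S_i}$ of minimum distance at least $\delta$. This minimum-distance condition is exactly the statement that deleting any set of at most $\delta-1$ coordinates from $S_i$ leaves $span(S_i)$ unchanged — otherwise there would be a codeword of $\mathcal{C}|_{S_i}$ supported on those coordinates, of weight at most $\delta-1$. In particular the ordinary Singleton bound applied to $\mathcal{C}|_{S_i}$ gives $rank(S_i)\le|S_i|-(\delta-1)\le r$.

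Then comes the construction. Put $T_0=\emptyset$; at step $i\ge1$ pick a seed $j_i\in S$ with ${\bf g}_{j_i}\notin span(T_{i-1})$ — possible while $rank(T_{i-1})<k$ since $S$ spans $\mathbb{F}_q^k$ — and, if $rank(T_{i-1}\cup S_{j_i})\le k-1$, take the full step $T_i=T_{i-1}\cup S_{j_i}$; once a full step is no longer possible, run one capping step that adjoins just enough coordinates of the next repair group to bring the rank to exactly $k-1$, and stop. For a full step write $A_i=S_{j_i}\setminus T_{i-1}$ and note $S_{j_i}\setminus A_i\subseteq T_{i-1}$. Since ${\bf g}_{j_i}\notin span(T_{i-1})$ we have $span(S_{j_i})\not\subseteq span(T_{i-1})$, so $span(S_{j_i}\setminus A_i)$ is a proper subspace of $span(S_{j_i})$; by the deletion property this forces $|A_i|\ge\delta$, and, again by the deletion property, removing any $\delta-1$ coordinates of $A_i$ keeps $span(T_{i-1}\cup A_i)=span(T_i)$, so $rank(T_i)\le rank(T_{i-1})+|A_i|-(\delta-1)$. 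Hence across a full step the quantity $|T|-rank(T)$ grows by at least $\delta-1$. A full step raises the rank by at most $rank(S_{j_i})\le r$, and full steps cease only once $rank\ge k-r$ (otherwise the next group, of rank $\le r$, would keep the rank at most $k-1$); therefore the number $m$ of full steps satisfies $mr\ge k-r$, i.e. $m\ge\lceil k/r\rceil-1$, while the capping step raises the rank to $k-1$ without lowering $|T|-rank(T)$. The final set $T$ then has $rank(T)=k-1$ and $|T|=rank(T)+\bigl(|T|-rank(T)\bigr)\ge(k-1)+m(\delta-1)\ge(k-1)+(\lceil k/r\rceil-1)(\delta-1)$, as required.

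The step I expect to be most delicate is the per-step deficit estimate: the gain of $\delta-1$ has to be extracted by deleting locally redundant coordinates from the \emph{newly added} part $A_i$, not from the full repair group $S_{j_i}$ nor from its overlap with $T_{i-1}$, and one must verify $|A_i|\ge\delta$ so that such a deletion is actually available; this is exactly where the choice of the seed ${\bf g}_{j_i}$ outside $span(T_{i-1})$ is used, and it is also what renders the overlap $S_{j_i}\cap T_{i-1}$ harmless. The remaining ingredients — existence of the seed while $rank<k$, the integer arithmetic behind $m\ge\lceil k/r\rceil-1$, and the accounting in the capping step — are routine.
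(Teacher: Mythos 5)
Your proof is correct: the reduction to finding a set $T$ with $rank(T)\le k-1$ and $|T|\ge (k-1)+(\lceil k/r\rceil-1)(\delta-1)$, the translation of the $(r,\delta)_i$-locality condition into the ``deleting $\delta-1$ coordinates of $S_i$ preserves $span(S_i)$'' property, and the greedy accounting of the deficit $|T|-rank(T)$ all check out, including the delicate step of extracting the gain of $\delta-1$ from the newly added part $A_i$. Note that the paper itself gives no proof of this lemma --- it is imported verbatim from the cited reference [PKLK] --- and your argument is essentially the standard one from that source, so there is nothing in the paper to diverge from.
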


\subsection{Locally repairable codes for multiple disk failures with good parallel reading ability}

\begin{definition}[\cite{WZ}]\label{def_r_delta_c}
The $i$th code symbol $c_i$, $1\leq i\leq n$, of an $[n,k,d]$ linear code $\mathcal{C}$ is said to have $(r,\delta)_c$-locality if
there exist $\delta-1$ pairwise disjoint sets $R^{(i)}_1,R^{(i)}_2,\cdots,R^{(i)}_{\delta-1}\subseteq [n]\backslash\{i\}$, satisfying
\begin{itemize}
  \item $\left|R^{(i)}_j\right|\leq r$; and
  \item ${\bf g}_i\in span\left(R^{(i)}_j\right)$
\end{itemize}
for all $0\le j<\delta$ where each $R^{(i)}_j$ is called a repair group of ${\bf g}_i$. Further,
 a code $\mathcal{C}$ is said to have information $(r,\delta)_c$-locality if there is a
subset $S\subseteq [n]$ with $rank(S)=k$ such that for each $i\in S$, the $i$th code symbol
has $(r, \delta)_c$-locality.
\end{definition}

\begin{lemma} [\cite{RPDV}]\label{new_bound}
For an $[n,k,d]_q$ linear code with information $(r,\delta)_c$-locality, then
\begin{equation}\label{eqn_bound_with_condition}
d\leq n-k-\left\lceil\frac{k(\delta-1)}{r}\right\rceil+\delta
\end{equation}
if there is only one check symbol in each repair group.
\end{lemma}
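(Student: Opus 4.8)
The plan is to reduce Lemma \ref{new_bound} to the classical Singleton-type bound of Lemma \ref{lemma_bound_i} by showing that, under the extra hypothesis that each repair group contains exactly one check symbol, information $(r,\delta)_c$-locality forces a stronger structure than plain $(r,\delta)_i$-locality, essentially that the $k$ information symbols are covered by disjoint ``local'' blocks each of which carries roughly $r+\delta-1$ coordinates but only $r$ of which are information symbols. First I would set up notation: let $S\subseteq[n]$ with $rank(S)=k$ be the information set guaranteeing the locality, and for each $i\in S$ collect the $\delta-1$ pairwise disjoint repair groups $R^{(i)}_1,\dots,R^{(i)}_{\delta-1}$ together with $i$ itself into a candidate local set. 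The key observation to extract from the ``one check symbol per repair group'' hypothesis is that, since $\mathbf{g}_i\in span(R^{(i)}_j)$ and $R^{(i)}_j$ contributes only one redundant (check) coordinate, all but one coordinate of $R^{(i)}_j$ must already lie in the information set $S$; hence each repair group is ``$r$ information symbols plus $1$ check symbol'' up to the usual degenerate cases.

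The main construction step is then a greedy/covering argument: I would build a chain of subsets $\emptyset = T_0 \subsetneq T_1 \subsetneq \cdots \subsetneq T_s = S'$ of an information-generating set, where at each step I pick an as-yet-uncovered information coordinate $i$, adjoin $i$ and enough of its repair groups to $T_{\ell}$ to obtain $T_{\ell+1}$, and track two quantities: the increase in rank, which I want to be at most $r$ per step, and the increase in the ``redundancy budget'' $|T_{\ell+1}| - rank(T_{\ell+1})$, which I want to grow by at least $\delta-1$ per step because the $\delta-1$ disjoint repair groups each inject one fresh check coordinate. After $\lceil k(\delta-1)/r\rceil$-type many steps (the ceiling coming from the fact that the last step may contribute fewer than $r$ new information symbols) one reaches a set that generates the whole code and whose size minus rank is at least $\lceil k(\delta-1)/r\rceil$. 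Feeding $|T| - rank(T) \ge \lceil k(\delta-1)/r\rceil$ with $rank(T)=k$ into the standard argument that $d \le n - |T| + rank(T) \le n - k - \lceil k(\delta-1)/r\rceil + \delta$ — equivalently invoking the generalized Singleton bound via the observation that deleting $|T|$ coordinates leaves a code of dimension $k - $ (something) — yields \eqref{eqn_bound_with_condition}. One clean way to package the last step: show the code restricted to the complement of $T$ still has dimension $\ge k - (\text{rank drop})$, so $d \le n - |T| + (k - \dim \mathcal{C}|_{[n]\setminus T}^{\perp}\text{-type count})$; I expect the cleanest route is actually to mimic the proof of Lemma \ref{lemma_bound_i} verbatim with the improved block sizes.

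The hard part will be handling the overlaps and degeneracies in the covering argument rigorously: repair groups of different information symbols need not be disjoint, an information symbol may lie in another symbol's repair group, and a repair group of size less than $r$ or one whose span already sits inside previously chosen coordinates can stall the rank growth. I would deal with this exactly as in the Gopalan \emph{et al.}/Prakash \emph{et al.} tradition — when adjoining a new symbol $i$, only count the genuinely new information coordinates it and its groups bring in, accept a possibly smaller-than-$r$ jump on the final step (this is the source of the ceiling), and argue that the disjointness of the $\delta-1$ groups of a single $i$ is enough to guarantee $\delta-1$ new check coordinates per step even if groups belonging to distinct chosen symbols collide (one must be slightly careful to choose, at each step, an $i$ whose repair structure is ``fresh enough,'' which is where the hypothesis of \emph{exactly one} check symbol per group — as opposed to at least one — is used to keep the bookkeeping tight). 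Once this combinatorial core is in place, the passage to the distance bound is routine.
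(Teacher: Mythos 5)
First, a point of reference: the paper does not prove Lemma~\ref{new_bound} at all --- it is imported from \cite{RPDV} --- so your attempt can only be judged on its own terms, and on those terms its central accounting fails. You want each step of the greedy chain to raise the rank by at most $r$ while raising the redundancy $|T_{\ell+1}|-rank(T_{\ell+1})$ by at least $\delta-1$. These two requirements are incompatible under $(r,\delta)_c$-locality: to harvest $\delta-1$ independent local dependencies at an information coordinate $i$ you must adjoin all $\delta-1$ pairwise disjoint repair groups $R^{(i)}_1,\dots,R^{(i)}_{\delta-1}$, and since these groups are disjoint only from \emph{each other} (not from the information set), they drag in up to $(\delta-1)(r-1)$ further information coordinates; the set $\{i\}\cup R^{(i)}_1\cup\dots\cup R^{(i)}_{\delta-1}$ can have rank as large as $1+(\delta-1)(r-1)$, far exceeding $r$. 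Your structural claim that the information symbols sit in local blocks ``only $r$ of which are information symbols'' is exactly what fails here. Worse, if your bookkeeping were achievable it would reproduce the proof of Lemma~\ref{lemma_bound_i} and hence the stronger bound $d\le n-k+1-(\lceil k/r\rceil-1)(\delta-1)$; the paper's Example~\ref{exa1} exhibits a $[16,8,4]$ code with $(3,4)_c$-locality for which that bound would force $n\ge 17$. So the reduction to Lemma~\ref{lemma_bound_i} is impossible in principle, not merely delicate.

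The argument that does work is global rather than greedy, and it is essentially the computation the paper runs in reverse in Section~\ref{sec-relationship}. Writing $G=(I_k\,|\,P)$, each repair group with a single check symbol corresponds to a check column of weight at most $r$; each $i\in[k]$ lies in the supports of at least $\delta-1$ such partial columns, so counting incidences gives $n_1\ge\lceil k\Delta/r\rceil$ partial columns, where $\Delta=\min_{i}\delta_i\ge\delta-1$ as in \eqref{Eqn_n1}. Pick $i_0$ attaining $\Delta$ and set $T=([k]\setminus\{i_0\})\cup\{k+l:\ i_0\notin supp({\bf p}_l),\ l\in[n_1]\}$; then $rank(T)=k-1$ and $|T|=k-1+n_1-\Delta$, so $d\le n-|T|=n-k+1+\Delta-n_1\le n-k+1+\Delta-\lceil k\Delta/r\rceil$, and since $k\ge r$ the quantity $\Delta-\lceil k\Delta/r\rceil$ is non-increasing in $\Delta$, which yields \eqref{eqn_bound_with_condition}. (Equivalently, the $i_0$th row of $G$ is a codeword of weight at most $1+\Delta+n_2$, which is inequality \eqref{eqn_delta}.) The idea your proposal is missing is that a weight-$\le r$ check column is \emph{shared} by all of the at most $r$ information symbols it covers; this sharing is what puts the factor $\delta-1$ inside the ceiling, and a chain that processes one information symbol at a time, charging its whole repair structure to that one symbol, cannot see it.
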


\begin{remark} For the case { $r|k$, the} above bound is exactly the one in Lemma \ref{lemma_bound_i}. While for
the case $r\nmid k$, $\left\lceil k(\delta-1)/ r\right\rceil\leq \left\lceil\frac{k}{r}\right\rceil(\delta-1)$ implies
that $$k+d+\left\lceil\frac{k(\delta-1)}{r}\right\rceil-\delta\leq k+d-1+\left(\left\lceil\frac{k}{r}\right\rceil-1\right)(\delta-1)$$
This is to say, compared with the optimal codes with information $(r,\delta)_i$-locality, there may exist shorter optimal codes with information
$(r,\delta)_c$-locality
for the case $r\nmid k$.  The following example shows that such code indeed exists.
\begin{example}
\label{exa1}
For the case $k=8$, $r=\delta-1=3$, let
\begin{equation}\label{eqn_matrix}
\small
G=\left(
\begin{array}{rcccccccccccccccl}
1&0&0&0&0&0&0&0&0&1&0&0&0&0&1&1\\
0&1&0&0&0&0&0&0&1&0&1&0&0&0&0&1\\
0&0&1&0&0&0&0&0&1&1&0&1&0&0&0&0\\
0&0&0&1&0&0&0&0&0&1&1&0&1&0&0&0\\
0&0&0&0&1&0&0&0&0&0&1&1&0&1&0&0\\
0&0&0&0&0&1&0&0&0&0&0&1&1&0&1&0\\
0&0&0&0&0&0&1&0&0&0&0&0&1&1&0&1\\
0&0&0&0&0&0&0&1&1&0&0&0&0&1&1&0\\
\end{array}
\right)
\end{equation}
It is easy to check that the linear code $\mathcal{C}$  generated by $G$  has information $(3,4)_c$-locality and minimum Hamming distance $d=4$. By lemma \ref{new_bound}, $\mathcal{C}$ is an optimal  $[16,8,4]$ linear code. But by lemma \ref{lemma_bound_i}, the optimal linear code with information $(3,4)_i$-locality has length $n\geq k+d-1+\left(\left\lceil\frac{k}{r}\right\rceil-1\right)(\delta-1)=17>16.$
\end{example}
\end{remark}

\subsection{The update-efficiency of codes}

\begin{definition}[\cite{MCW}]\label{Def_Update}
The \textit{update-efficiency} of the code $\mathcal{C}$ is the maximum number of symbols that need to be changed when there is one symbol changed for the message.
\end{definition}

\begin{lemma}[\cite{MCW}]\label{lemma_bound_updata}
For any binary $[n,k,d]$ linear  code $\mathcal{C}$, its update-efficiency $t$ satisfies $t\ge d$.
\end{lemma}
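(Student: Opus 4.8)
The plan is to argue by contrapositive on the structure of the generator matrix. Suppose, for the sake of contradiction, that some binary $[n,k,d]$ linear code $\mathcal{C}$ has update-efficiency $t < d$. Recall that changing a single message symbol $m_\ell$ to $m_\ell+1$ (addition in $\mathbb{F}_2$) changes the codeword from $mG$ to $mG + {\bf r}_\ell$, where ${\bf r}_\ell$ is the $\ell$-th row of the generator matrix $G$. Hence the number of coordinates that must be updated when only $m_\ell$ changes is exactly the Hamming weight of the row ${\bf r}_\ell$. By Definition \ref{Def_Update}, the update-efficiency $t$ is the maximum of these quantities over all message positions, so $t = \max_{1\le \ell\le k} \mathrm{wt}({\bf r}_\ell)$.

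From this identification, the assumption $t < d$ forces $\mathrm{wt}({\bf r}_\ell) < d$ for every row ${\bf r}_\ell$ of $G$. But each row ${\bf r}_\ell$ is itself a nonzero codeword of $\mathcal{C}$ (it is the image under $G$ of the $\ell$-th standard basis vector $e_\ell$, which is nonzero since $G$ has rank $k$ and so no row is the all-zero vector). Therefore $\mathcal{C}$ contains a nonzero codeword of weight strictly less than $d$, contradicting the definition of the minimum Hamming distance $d$. Consequently $t \ge d$, as claimed.

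The only real subtlety — and the step I would state carefully — is the precise reading of "update-efficiency": Definition \ref{Def_Update} refers to one symbol changed for the message, so the relevant perturbations are exactly the $k$ single-coordinate changes in the message space, and over $\mathbb{F}_2$ the unique nonzero change in a coordinate is "flip the bit," which makes the induced codeword difference literally a row of $G$. This is where binarity is used; over a larger field the codeword difference would be a nonzero scalar multiple of a row, which has the same support, so in fact the same bound holds, but the clean statement $t = \max_\ell \mathrm{wt}({\bf r}_\ell)$ is special to $\mathbb{F}_2$. I would also note in passing that equality $t=d$ is achievable (e.g. by choosing $G$ whose rows are minimum-weight codewords forming a basis), which motivates the notion of \emph{optimal update-efficiency} used in the sequel.
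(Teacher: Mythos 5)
Your proof is correct. Note that the paper itself gives no proof of this lemma --- it is quoted from \cite{MCW} --- so there is no argument to compare against; your identification of the update-efficiency with the maximum Hamming weight of the rows of $G$, followed by the observation that each row is a nonzero codeword and hence has weight at least $d$, is the standard argument, and it is exactly consistent with the paper's own Definition \ref{Def_Update_New_1} and with the remark following the lemma that the bound extends verbatim to non-binary codes (your scalar-multiple observation).
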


Although the above result was proved only for binary code in \cite{MCW}, it is easy to check that the inequality
$t\geq d$ also holds for the non-binary case. In general, the update-efficiency should be as small as possible.  Nevertheless,
the lower bound $t\geq d$ tell us that the minimum update-efficiency of an $[n,k,d]$ linear code $\mathcal{C}$ is at least $d$.
Hence,

\begin{definition}\label{Def_Update_New}
An $[n,k,d]$ linear  code $\mathcal{C}$ is said to have optimal update-efficiency if its update-efficiency is $d$.
\end{definition}

\subsection{Packing}

Finally, we review packing, the main combinatorial tool used in this paper.

\begin{definition}[\cite{CD}]\label{Def_Pack}
Let $R$ be a subset of positive integers and $k\geq 2$ be an integer.
A $(k,R,1)$ packing is a two tuple $(X,\mathcal{B})$ where $X$ is a set of $k$ elements and
$\mathcal{B}$ is a collection of subset of $X$ called blocks that satisfies
\begin{itemize}
\item $R=\{|B|: B \in \mathcal{B}\}$;
\item every pair of distinct elements of $X$ occurs in at most one block of $\mathcal{B}$.
\end{itemize}
\end{definition}

If $R=\{r\}$, a $(k,R,1)$ packing is also denoted as $(k,r,1)$ \textit{packing}. Moreover, a $(k,R,1)$ packing is said
to be \textit{regular} if each element of $X$ appears in exactly $t$ blocks, denoted by $t$-regular $(k,R,1)$ packing.

\begin{definition}[\cite{CD}]\label{Def_Pack-Res}
A packing $(X,\mathcal{B})$, denoted by $(k,R,1;u)$ packing, is said to be \textit{resolvable} if
\begin{itemize}
\item $\mathcal{B}=\bigcup\limits_{i=1}^{u}\mathcal{B}_i$ with $\mathcal{B}_i\bigcap\mathcal{B}_j=\emptyset$ for any $i\ne j\in[u]$;
\item For any $i\in [u]$, $\mathcal{B}_i$ is a partition of $X$, i.e., $X=\bigcup\limits_{B\in \mathcal{B}_i} B$ and $B\bigcap B'=\emptyset$ for any $B\ne B'\in \mathcal{B}_i$.
\end{itemize}
\end{definition}

Obviously, a $(k,R,1;u)$ packing is an $u$-regular packing.

\begin{example}
The two tuple $(X, \mathcal{B})$ with $X=[8]$ and $\mathcal{B}=\mathcal{B}_1\bigcup \mathcal{B}_2\bigcup \mathcal{B}_3\bigcup\mathcal{B}_4$ is a $(8,\{3,2\},1;4)$ resolvable packing, where
\begin{equation*}
\begin{split}
\mathcal{B}_1=\{\{2,3,8\},\,\{6,7,4\},\,\{1,5\}\},\ \ \mathcal{B}_2=\{\{3,4,1\},\,\{7,8,5\},\,\{2,6\}\}\\
\mathcal{B}_3=\{\{4,5,2\},\,\{8,1,6\},\,\{3,7\}\},\ \ \mathcal{B}_4=\{\{5,6,3\},\,\{1,2,7\},\,\{4,8\}\}
\end{split}
\end{equation*}
\end{example}

\section{The combinatorial characterization of Locally repairable codes via packing}\label{sec-relationship}

For simplicity, from now on we always assume that the generator matrix $G$ of $\mathcal{C}$ is of the canonical form.
That is,
\begin{eqnarray}\label{Eqn_Gen_Matrix}
G=({\bf e}_1,{\bf e}_2,\ldots,{\bf e}_{k}\ |\ {\bf p}_1,\ldots,{\bf p}_{n-k})=(I_k\ |\ P)
\end{eqnarray}
where $I_k=({\bf e}_1,{\bf e}_2,\ldots,{\bf e}_{k})$ is the $k\times k$ identity matrix, ${\bf e}_i$ and ${\bf p}_j$ are column vectors of length $k$ for $1\le i\le k$
and $1\le j\le n-k$.
It is well known that the resultant code $\mathcal{C}$ is a systematic code whose information symbols $c_1,\cdots, c_k$ and check symbols $c_{k+1},\cdots,c_n$
correspond to the columns ${\bf e}_1,{\bf e}_2,\ldots,{\bf e}_{k}$ and  ${\bf p}_1,\ldots,{\bf p}_{n-k}$ respectively. So, in this paper
we call ${\bf p}_1,\ldots,{\bf p}_{n-k}$ the check columns.

Given an information symbol $c_i$, consider its repair group $R_j^{(i)}$, where $1\le i\le k$ and $1\le j\le \delta-1$.
Note that there is only one check symbol $c_{k+l}$ in $R_j^{(i)}$ for $1\le l\le n-k$. Clearly, $supp({\bf p}_l)= \{i\}\cup \left(R_j^{(i)}\setminus\{k+l\}\right)$.
That is,  a repair group is completely determined by a check column, and vice versa. Therefore, we have  the following alternative
definitions of a systematic code with information $(r,\delta)_c$-locality and its update-efficiency to Definitions \ref{def_r_delta_c} and \ref{Def_Update}.

\begin{definition}\label{r_delta_i_one}
For any integer $1\leq i\leq k$, if there exist distinct $\delta-1$ columns ${\bf p}^{(i)}_{1}$, ${\bf p}^{(i)}_{2}$, $\ldots$, ${\bf p}^{(i)}_{\delta-1}$
of a $k\times (n-k)$ matrix $P$ satisfying
\begin{itemize}
  \item $\left|supp\left({\bf p}^{(i)}_{j}\right)\right|\leq r$ with $1\leq j\leq \delta -1$; and
  \item $\{i\}= supp\left({\bf p}^{(i)}_{j}\right)\bigcap supp\left({\bf p}^{(i)}_{t}\right)$ for any two integers $1\leq j\neq t\leq \delta -1$,
\end{itemize}
then the systematic code $\mathcal{C}$ generated by $G=(I_k\ |\ P)$ is said to have information $(r,\delta)_c$-locality. Further, $\mathcal{C}$ is said to be
an optimal $[n,k,d]$  systematic code with information $(r,\delta)_c$-locality if it achieves the bound in \eqref{eqn_bound_with_condition}.
\end{definition}

\begin{definition}\label{Def_Update_New_1}
The \textit{update-efficiency} of the systematic code $\mathcal{C}$ generated by the matrix $G$ in \eqref{Eqn_Gen_Matrix} is the maximum Hamming weight of the rows of $G$.
\end{definition}

As for the aforementioned repair group $R_j^{(i)}$ , recall that $\left|R_j^{(i)}\right|\le r$ , so does $|supp({\bf p}_l)|$. Consequently,  we divide the check symbols $c_{k+l}$ ($1\le l\le n-k$) into two subsets according to the Hamming weight of the corresponding column ${\bf p}_{l}$: $c_{k+l}$ and ${\bf p}_{l}$ are said to be partial check symbol and partial check column respectively if $\left|supp\left({\bf p}_{l}\right)\right|\leq r$; Otherwise they are said to be non-partial check symbol and non-partial check column. Denote the number of partial check symbols and non-partial check symbols by $n_1$ and $n_2$ respectively. It is clear that $n_1+n_2+k=n$. Without loss of generality (W.L.O.G.), assume that $|supp({\bf p}_{i})|\le r$ if and only if $1\le i\le n_1$.

Obviously, $supp({\bf p}_i)$ ($1\le i\le n_1$) is crucial to study the  locality property.
In this section, for an optimal $[n,k,d]$  systematic code $\mathcal{C}$ with information $(r,\delta)_c$-locality,  we
investigate the combinatorial structure of the supports of partial check columns. To this end, we first characterize the supports of an $[n,k]$ systematic code with information $(r,\delta)_c$-locality  via packing.

\begin{lemma}\label{lem_comb}
Given an $[n,k]$ systematic code $\mathcal{C}$ with information $(r,\delta)_c$-locality. For any two distinct elements, $i_1$, $i_2\in[ k]$,  if the pair  $(i_1, i_2)$ occurs in $t> 1$ support sets of partial check columns, then  $i_1$ and $i_2$ occur in no less than $\delta+t-2$ support sets of  partial check columns respectively;
\end{lemma}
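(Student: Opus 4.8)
The plan is to unpack what it means for an information symbol $c_i$ to have $(r,\delta)_c$-locality in the systematic setting, and then count carefully. Fix the two distinct elements $i_1, i_2 \in [k]$ and suppose the pair $(i_1,i_2)$ lies in the supports of $t > 1$ partial check columns, say $\mathbf{p}_{l_1}, \ldots, \mathbf{p}_{l_t}$ with $i_1, i_2 \in supp(\mathbf{p}_{l_m})$ for all $m \in [t]$. By Definition \ref{r_delta_i_one}, applied to the information symbol $c_{i_1}$, there exist $\delta-1$ partial check columns $\mathbf{p}^{(i_1)}_1, \ldots, \mathbf{p}^{(i_1)}_{\delta-1}$ (they are partial because each has support of size at most $r$), pairwise intersecting exactly in $\{i_1\}$, and each containing $i_1$ in its support. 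The same holds for $c_{i_2}$ with columns $\mathbf{p}^{(i_2)}_1, \ldots, \mathbf{p}^{(i_2)}_{\delta-1}$.

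The key observation is a disjointness restriction forced by the second bullet of Definition \ref{r_delta_i_one}. Since $\{i_1\} = supp(\mathbf{p}^{(i_1)}_j) \cap supp(\mathbf{p}^{(i_1)}_{j'})$ for $j \ne j'$, and $i_2 \ne i_1$, at most one of the $\delta-1$ columns in the repair structure of $c_{i_1}$ can contain $i_2$ in its support — otherwise two of them would share both $i_1$ and $i_2$. Symmetrically, at most one of the $\delta-1$ columns in the repair structure of $c_{i_2}$ can contain $i_1$. Now consider the family of partial check columns whose support contains $i_1$; I claim its size is at least $\delta + t - 2$. The repair structure of $c_{i_1}$ already contributes $\delta-1$ such columns. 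Among the $t$ columns $\mathbf{p}_{l_1}, \ldots, \mathbf{p}_{l_t}$ (each contains $i_1$), at most one can coincide with a column from the repair structure of $c_{i_1}$ that also contains $i_2$ — because, as just argued, at most one column in that structure contains $i_2$, whereas all $t$ of the $\mathbf{p}_{l_m}$ contain $i_2$. Hence at least $t-1$ of the $\mathbf{p}_{l_m}$ are new columns not among $\mathbf{p}^{(i_1)}_1, \ldots, \mathbf{p}^{(i_1)}_{\delta-1}$, giving at least $(\delta-1) + (t-1) = \delta + t - 2$ distinct partial check columns containing $i_1$. The identical argument with the roles of $i_1$ and $i_2$ swapped handles $i_2$.

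I would then write this up by first fixing notation for the two repair families and the $t$ shared columns, stating the "at most one" lemma as an explicit sub-claim (immediate from the pairwise-intersection condition), and finishing with the counting paragraph above, done once and invoked twice by symmetry. The main obstacle — really the only subtle point — is making the double-counting rigorous: one must be careful that the $t-1$ "new" columns are genuinely distinct both from each other (they are, being distinct indices $l_m$) and from all $\delta-1$ columns of the $c_{i_1}$-structure, and the cleanest way to see the latter is to note that any column of that structure other than the (at most one) exceptional one fails to contain $i_2$, so it cannot equal any $\mathbf{p}_{l_m}$. Everything else is bookkeeping.
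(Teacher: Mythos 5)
Your proof is correct and rests on the same key observation as the paper's: since the $\delta-1$ repair columns of $c_{i_1}$ pairwise intersect exactly in $\{i_1\}$, at most one of them can contain $i_2$, so at least $t-1$ of the $t$ shared columns are new. The paper packages this as a short pigeonhole argument by contradiction (if $i_1$ lay in at most $\delta+t-3$ supports, any $\delta-1$ of them would include two containing $i_2$), whereas you count directly; the substance is identical.
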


\begin{proof}
We prove it by the contradiction. Suppose that $i_1$ (resp. $i_2$) occurs in at most $\delta+t-3$ support sets of partial check columns.
Then,  at most $\delta+t-3-t=\delta-3$ ones of these $\delta+t-3$  sets do not contain $i_2$ (resp. $i_1$).  Hence,
the two elements $i_1$ and $i_2$ must occur  in at least  two of any $\delta-1$ ones of these $\delta+t-3$ sets,
which contradicts Definition \ref{r_delta_i_one}.
\end{proof}

\begin{theorem}\label{thm_comb}
Let $\mathcal{C}$ be an $[n,k]$ systematic code. Then $\mathcal{C}$ has information $(r,\delta)_c$-locality if and only if there exists a $(k,R,1)$ packing
$\left([k],\mathcal{B}=\{B_j\}_{j=1}^{n_1}\right)$ with $\max(R)\leq r$ and $|\{B_j|i\in B_j\}|\ge \delta-1$ for any $1\le i\le k$.
\end{theorem}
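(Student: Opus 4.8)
The plan is to prove the two directions separately, using the dictionary (already established in the paragraph preceding Definition \ref{r_delta_i_one}) between a repair group $R_j^{(i)}$ containing the single check symbol $c_{k+l}$ and the support set $supp({\bf p}_l) = \{i\}\cup(R_j^{(i)}\setminus\{k+l\})$. The key combinatorial object will be the family of support sets of the partial check columns, $\mathcal{B}=\{B_l\}_{l=1}^{n_1}$ with $B_l=supp({\bf p}_l)\subseteq[k]$; note $\max(R)\le r$ is automatic from the definition of ``partial'' ($|supp({\bf p}_l)|\le r$). So the whole content is (i) translating $(r,\delta)_c$-locality into the packing axiom ``every pair lies in at most one block'' plus the degree condition ``$|\{B_l : i\in B_l\}|\ge\delta-1$'', and (ii) conversely reconstructing a valid check matrix from an abstract packing.

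For the ``only if'' direction, suppose $\mathcal{C}$ has information $(r,\delta)_c$-locality. Fix $i\in[k]$. By Definition \ref{r_delta_i_one} there are $\delta-1$ distinct columns ${\bf p}^{(i)}_1,\dots,{\bf p}^{(i)}_{\delta-1}$ with $|supp({\bf p}^{(i)}_j)|\le r$ (hence each is one of the partial check columns, i.e. each support is a block $B_l$) and with pairwise intersections equal to $\{i\}$. In particular $i$ lies in at least these $\delta-1$ blocks, giving $|\{B_l:i\in B_l\}|\ge\delta-1$. For the packing axiom I argue by contradiction: if some pair $\{i_1,i_2\}$ lay in two blocks $B_{l_1},B_{l_2}$, I would want to contradict the intersection condition — but the subtlety is that $B_{l_1},B_{l_2}$ need not both belong to the distinguished family of $i_1$. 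This is exactly where Lemma \ref{lem_comb} enters: if a pair $(i_1,i_2)$ occurs in $t>1$ support sets of partial check columns, then $i_1$ occurs in at least $\delta+t-2\ge\delta$ of them. Applying the argument inside Lemma \ref{lem_comb} (or invoking it with $t=2$), among the $\delta-1$ distinguished blocks for $i_1$, at most $\delta-3$ avoid $i_2$ while at least $2$ contain $i_2$ — but two distinguished blocks of $i_1$ that both contain $i_2$ have intersection $\supseteq\{i_1,i_2\}\ne\{i_1\}$, contradicting Definition \ref{r_delta_i_one}. Hence every pair lies in at most one block, so $([k],\mathcal{B})$ is a $(k,R,1)$ packing with the required degree bound.

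For the ``if'' direction, start from a $(k,R,1)$ packing $([k],\mathcal{B}=\{B_j\}_{j=1}^{n_1})$ with $\max(R)\le r$ and every point in at least $\delta-1$ blocks. Build the $k\times n_1$ matrix $P'$ whose $j$-th column ${\bf p}_j$ is any vector over a sufficiently large $\mathbb{F}_q$ with $supp({\bf p}_j)=B_j$ (for instance the $0/1$ incidence vector works if $q=2$; in general any nonzero entries suffice for the locality property, which only concerns supports). Set $G=(I_k\mid P')$, possibly padding with further columns to reach length $n$. To verify information $(r,\delta)_c$-locality via Definition \ref{r_delta_i_one}, fix $i\in[k]$ and let $B_{j_1},\dots,B_{j_{\delta-1}}$ be $\delta-1$ distinct blocks containing $i$ (they exist by the degree hypothesis). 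Each ${\bf p}_{j_s}$ has support of size $\le r$; and for $s\ne t$, since $\{B_j\}$ is a packing the pair structure forces $B_{j_s}\cap B_{j_t}=\{i\}$ — indeed both contain $i$, and if they shared another point $i'$ then $\{i,i'\}$ would lie in two blocks, violating the packing property. Thus $supp({\bf p}_{j_s})\cap supp({\bf p}_{j_t})=\{i\}$, exactly the condition in Definition \ref{r_delta_i_one}, so $\mathcal{C}$ has information $(r,\delta)_c$-locality.

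The main obstacle is the packing-axiom half of the ``only if'' direction: a priori a pair could be forced into several partial-check supports for reasons unrelated to a single point's repair structure, and ruling this out is precisely the role of Lemma \ref{lem_comb}, which I expect to invoke essentially verbatim (with $t=2$). The ``if'' direction is routine once one observes that the locality conditions in Definition \ref{r_delta_i_one} depend only on the supports, so the nonzero field entries may be chosen arbitrarily.
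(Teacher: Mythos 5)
Your ``only if'' direction contains a genuine error: you claim that the support sets of the partial check columns themselves satisfy the packing axiom (every pair in at most one block), and you try to derive a contradiction from a pair $\{i_1,i_2\}$ lying in two supports. No such contradiction exists. Definition \ref{r_delta_i_one} only asserts the \emph{existence} of $\delta-1$ distinguished columns for $i_1$ with pairwise intersections $\{i_1\}$; the two supports containing the pair need not both belong to that distinguished family, and Lemma \ref{lem_comb} does not rule this situation out --- it merely forces $i_1$ and $i_2$ to appear in \emph{more} supports (at least $\delta+t-2$ of them). Your step ``among the $\delta-1$ distinguished blocks for $i_1$, at least $2$ contain $i_2$'' does not follow, because the distinguished family can simply be chosen to avoid one of the offending supports. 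The paper's own Example \ref{exa2} is a direct counterexample to your claim: the code there has information $(3,3)_c$-locality, yet the pair $\{1,6\}$ lies in both $supp({\bf p}_5)$ and $supp({\bf p}_6)$, so the raw supports do not form a packing.

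The theorem only asserts that \emph{some} packing exists, not that the supports are one. The paper's proof constructs it by Algorithm 1: starting from $B_j=supp({\bf p}_j)$, whenever a pair occurs in $t>1$ blocks, delete one of the two elements from $t-1$ of those blocks until every pair occurs at most once. Lemma \ref{lem_comb} is then used for exactly the opposite purpose from yours: since each of $i_1,i_2$ originally lies in at least $\delta+t-2$ supports, after removing it from $t-1$ of them it still lies in at least $\delta-1$ blocks, so the degree condition $|\{B_j\mid i\in B_j\}|\ge\delta-1$ survives the deletions. Your ``if'' direction is essentially the paper's (and fine), but the forward direction needs this shrinking construction; as written it proves a false statement.
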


\begin{proof}
Assume that $\mathcal{C}$ has information $(r,\delta)_c$-locality. Let $\mathcal{B}=\{B_j\}_{j=1}^{n_1}$ be the set obtained by Algorithm 1. It is easy to see that (i) $B_j\subseteq supp({\bf p}_j)$ for any $1\le j\le n_1$, which leads to $\max(R)=\max_{j=1}^{n_1}(|B_j|)|\le r$ due to $|supp({\bf p}_j)|\leq r$; (ii) Any two distinct elements occur simultaneously in at most one block. So, the two tuple $([k],\mathcal{B})$ is a packing by Definition \ref{Def_Pack}.

Note that  (i) Initially in Line 1, Algorithm 1,  $|\{B_j|i\in B_j\}|\ge \delta-1$ for any $1\le i\le k$ by Definition \ref{r_delta_i_one}; (ii) After the deletion in Line 3,  Algorithm 1, there are at least $\delta+t-2-(t-1)=\delta -1$ support sets of partial check columns including $i_1$ (resp. $i_2$) by Lemma \ref{lem_comb}. So, the output $\mathcal{B}=\{B_j\}_{j=1}^{n_1}$ of Algorithm 1 satisfies $|\{B_j|i\in B_j\}|\ge \delta-1$ for any $i \in [k]$.

The converse is also true from Definition \ref{r_delta_i_one}.
\begin{center}
\hrule
{\bf Algorithm 1:}{ Packing from the supports of partial check columns}
\vspace{1mm}
\hrule
\end{center}
\begin{center}
\begin{algorithmic}[1]
\INPUT{An $[n,k]$ systematic code  $\mathcal{C}$ with information $(r,\delta)_c$-locality.}
\STATE {Let $B_j=supp({\bf p}_j)$, $1\le j\le n_1$;}
\WHILE{there exist two distinct elements $i_1$, $i_2\in[k]$ satisfying
the pair  $(i_1, i_2)$ occurs in $t> 1$ ones of sets $P_1,\cdots,P_{n-k}$, say
$B_{i_1}, \cdots, B_{i_t}$;
      }
\STATE {Choose $t-1$ sets from $B_{i_1}, \cdots, B_{i_t}$ and delete one of $i_1$ and $i_2$ from each one;}
\ENDWHILE
\RETURN {$B_j$, $1\le j\le n_1$.}
\end{algorithmic}
\hrule
\end{center}
\end{proof}

\begin{example}
\label{exa2}
For $k=8$ and $r=\delta=3$, let
\begin{equation*}
\small
G=\left(
\begin{array}{rcccccccccccccl}
1&0&0&0&0&0&0&0&0&1&0&0&1&1\\
0&1&0&0&0&0&0&0&1&0&1&0&0&0\\
0&0&1&0&0&0&0&0&1&1&0&0&0&0\\
0&0&0&1&0&0&0&0&0&1&1&0&0&0\\
0&0&0&0&1&0&0&0&0&0&0&1&1&0\\
0&0&0&0&0&1&0&0&0&0&1&0&1&1\\
0&0&0&0&0&0&1&0&0&0&0&1&0&1\\
0&0&0&0&0&0&0&1&1&0&0&1&0&0\\
\end{array}
\right)
\end{equation*}
with
\begin{quote}
$supp({\bf p}_1)=\{2,3,8\}$, $supp({\bf p}_2)=\{1,2,3\},$ $supp({\bf p}_3)=\{2,4,6\},$ $supp({\bf p}_4)=\{5,7,8\},$ $supp({\bf p}_5)=\{1,5,6\}$, $supp({\bf p}_6)=\{1,6,7\}$
\end{quote}
We can check that the code $\mathcal{C}$ generated by $G$ is a $[14,8,3]$ code with information $(3,3)_c$-locality. Clearly $\{1,6\}$ occurs in $supp({\bf p}_5)$ and $supp({\bf p}_6)$. By Algorithm 1, we can delete one information symbol, for example $6\in supp({\bf p}_6)$, to get a packing as
$$\{\{2,3,8\},\{1,2,3\},\{2,4,6\},\{5,7,8\},\{1,5,6\},\{1,7\}\}$$
\end{example}

According to Theorem \ref{thm_comb}, a $(k,R,1)$ packing can be obtained by deleting some elements from
the supports of the partial check columns and further the new code still possesses information $(r,\delta)_c$-locality.
Specially, we can easily get the following conclusion by Lemma \ref{lem_comb} and Theorem \ref{thm_comb}.

\begin{corollary}\label{thm1-cor}
Let $\mathcal{C}$  be an $[n,k]$ systematic code. If at most one element of $[k]$ occurs in more than $\delta-1$ support sets of partial check columns, then
$([k],\{supp({\bf p}_j)\}_{j=1}^{n_1})$ naturally form a $(k,R,1)$ packing.
\end{corollary}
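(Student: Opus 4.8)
The plan is to show that, under the stated hypothesis, the supports $supp({\bf p}_1),\dots,supp({\bf p}_{n_1})$ already satisfy the second requirement of Definition \ref{Def_Pack} without any modification, i.e. no pair of distinct elements of $[k]$ lies in two or more of them. Since each $supp({\bf p}_j)$ is by construction a subset of $[k]$, this is exactly what is needed for $([k],\{supp({\bf p}_j)\}_{j=1}^{n_1})$ to be a $(k,R,1)$ packing with $R=\{|supp({\bf p}_j)|:1\le j\le n_1\}$; moreover $\max(R)\le r$ automatically, because every partial check column has support of size at most $r$ by the definition of a partial check column. Equivalently, the point is that Algorithm 1 in the proof of Theorem \ref{thm_comb} performs no deletions in this case.

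First I would argue by contradiction: suppose there exist two distinct elements $i_1,i_2\in[k]$ such that the pair $(i_1,i_2)$ occurs in $t$ of the support sets of partial check columns with $t\ge 2$. Then the hypothesis $t>1$ of Lemma \ref{lem_comb} is met, so that lemma applies and yields that each of $i_1$ and $i_2$ occurs in at least $\delta+t-2$ support sets of partial check columns. Since $t\ge 2$, we have $\delta+t-2\ge\delta>\delta-1$, so \emph{both} $i_1$ and $i_2$ occur in strictly more than $\delta-1$ support sets of partial check columns. This exhibits two distinct elements of $[k]$ each appearing in more than $\delta-1$ such sets, contradicting the hypothesis that at most one element of $[k]$ has this property. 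Hence no such repeated pair exists, and the supports form a packing. I would also remark that the multiplicity condition $|\{B_j\mid i\in B_j\}|\ge\delta-1$ from Definition \ref{r_delta_i_one} is inherited for free here, since nothing is deleted.

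I do not expect a genuine obstacle: the statement is essentially an immediate corollary of Lemma \ref{lem_comb}. The only point that requires a little care is that Lemma \ref{lem_comb} gives the lower bound $\delta+t-2$ for \emph{both} members of the repeated pair simultaneously; it is precisely this two-sidedness that lets the "at most one element" hypothesis be invoked to force the contradiction. If instead the conclusion bounded only one of the two elements, the argument would not close.
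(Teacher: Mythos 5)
Your argument is correct and is exactly the route the paper intends: the corollary is stated as an immediate consequence of Lemma \ref{lem_comb}, and your contradiction—if a pair repeats $t\ge 2$ times then \emph{both} of its members occur in at least $\delta+t-2>\delta-1$ support sets, violating the ``at most one element'' hypothesis—is precisely the easy deduction the paper leaves to the reader. No issues.
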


In general, we will show that at most $r-1$ elements need to be deleted in most cases for optimal systematic codes with information $(r,\delta)_c$-locality in the remainder of this section.

Given an $[n,k,d]$ systematic code $\mathcal{C}$ with information $(r,\delta)_c$-locality,
define
\begin{eqnarray*}
\delta_i=|\{j\in [n_1]|i\in supp({\bf p}_j)\}|
\end{eqnarray*}
for each element $i\in [k]$, i.e., the occurrence that an element appears in the support sets of partial check columns, and
\begin{eqnarray}\label{Eqn_def_Delta}
\Delta=\min_{1\leq i\leq k}\delta_i
\end{eqnarray}
By Definition \ref{r_delta_i_one}, $\delta_i\geq \delta-1$ for $i\in [k]$, thus
\begin{equation}\label{eqn_Delta}
\Delta\geq \delta-1
\end{equation}
For $j\in [n_1]$, let the Hamming weight of the partial check column ${\bf p}_{j}$  be $w_j$.
Then,
\begin{eqnarray}\label{Eqn_delta}
k\Delta \le\sum_{i=1}^{k}\delta_i=\sum_{j=1}^{n_1}w_j \leq n_1 r
\end{eqnarray}
since each element $i\in [k]$  appears in $\delta_i$ support sets of  partial check columns while  each set $supp({\bf p}_j)$ ($j\in [n_1]$) contains $w_j\le r$ elements, we get
\begin{eqnarray}\label{Eqn_n1}
n_1\ge \left\lceil {k\Delta\over r} \right\rceil
\end{eqnarray}

In what follows, we determine the exact value of $n_1$ for most cases about optimal systematic codes with information $(r,\delta)_c$-locality. We begin with two useful lemmas.

\begin{lemma}\label{Lem_Weight} With the notations as above, the Hamming weight of each row in matrix $P$ is no less than $d-1$.
\end{lemma}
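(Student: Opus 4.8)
The plan is to connect the Hamming weight of a row of $P$ to the minimum distance $d$ of the code via the standard relationship between the generator matrix in systematic form and the parity-check matrix. Recall that for a systematic code with generator matrix $G=(I_k\mid P)$, a parity-check matrix is $H=(-P^{T}\mid I_{n-k})$, and the minimum distance $d$ equals the smallest number of columns of $H$ that are linearly dependent. I would exploit this by examining a single row of $P$, say the $i$th row, which becomes (up to sign) the $i$th column of $-P^{T}$, i.e. the $i$th column of $H$.

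First I would suppose, for contradiction, that some row of $P$ — say row $i$ — has Hamming weight at most $d-2$. The $i$th column of $H$ is then $\mathbf{h}_i = (-\,{\rm row}_i(P)^{T},\, \mathbf{e}_i)^{T}$ where $\mathbf{e}_i$ is the $i$th standard basis vector of length $n-k$ living in the $I_{n-k}$ part. Its support among the first $k$ coordinates has size $w := {\rm wt}({\rm row}_i(P)) \le d-2$; call those coordinate positions $j_1,\dots,j_w \in [k]$. Next I would assemble a small linearly dependent set of columns of $H$: take the column $\mathbf{h}_i$ together with the $w$ columns of $H$ indexed by $j_1,\dots,j_w$ (these are columns of $-P^{T}$, i.e. among the first $k$ columns of $H$). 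Each such column $\mathbf{h}_{j_t}$ has its only nonzero entry in the last $n-k$ coordinates at position... wait — here I must be careful: columns of $-P^{T}$ need not be single-entry columns. Let me instead use the cleaner route below.

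The cleaner approach: I would use the dual/complementary description directly in terms of codewords. A row of $G$, say row $i$, is itself a codeword $\mathbf{c}^{(i)}$ of $\mathcal{C}$ (it is $\mathbf{e}_i$ encoded), and its weight is $1 + {\rm wt}({\rm row}_i(P))$, since the systematic part contributes the single $1$ in position $i$ and the redundant part contributes ${\rm row}_i(P)$. Since $\mathbf{c}^{(i)}$ is a nonzero codeword, its weight is at least $d$, hence ${\rm wt}({\rm row}_i(P)) \ge d-1$. The main (minor) obstacle is simply to state this correctly: one must note that distinct rows of $G$ are nonzero codewords of $\mathcal{C}$ (immediate since $G$ has rank $k$ and the $i$th row has a $1$ in column $i$ while all other rows have a $0$ there), and that the weight of the $i$th row decomposes as $1 + w_i$ where $w_i = {\rm wt}(\mathbf{p}\text{-part of row }i) = {\rm wt}({\rm row}_i(P))$. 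Combining $\mathrm{wt}(\mathbf{c}^{(i)}) = 1 + w_i \ge d$ yields $w_i \ge d-1$ for every $i \in [k]$, which is exactly the claim. I would present this as a two-line argument, emphasizing the codeword interpretation of the rows rather than manipulating $H$.
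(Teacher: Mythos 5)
Your final argument is correct and is exactly the paper's proof: each row of $G=(I_k\mid P)$ is a nonzero codeword of weight $1+\mathrm{wt}(\mathrm{row}_i(P))\ge d$, so the $P$-part has weight at least $d-1$. The initial detour through the parity-check matrix is unnecessary, but the route you settle on matches the paper's one-line justification.
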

\begin{proof}
The result directly follows from the fact that as a codeword, each row in $G=(I\,|\,P)$ has the Hamming weight no less than $d$.
\end{proof}

\begin{lemma}\label{Lem_Case2} For an $[n,k,d]_q$  systematic code with information $(r,\delta)_c$-locality, if $n_1=\lceil k\Delta/ r \rceil$
and $\delta\ge 4$, then $k> 2r$.
\end{lemma}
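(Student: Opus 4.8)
The plan is to argue by contradiction: assume $k \le 2r$ together with $n_1 = \lceil k\Delta/r\rceil$ and $\delta \ge 4$, and derive a violation of the minimum-distance optimality (Lemma \ref{Lem_Weight}) or of the packing structure forced by Theorem \ref{thm_comb}. The starting point is the chain \eqref{Eqn_delta}: the equality $n_1 = \lceil k\Delta/r\rceil$ together with $k\Delta \le \sum_j w_j \le n_1 r$ is very tight, so almost all partial check columns must have full weight $w_j = r$, and almost all elements $i\in[k]$ must have $\delta_i = \Delta$ (at most a controlled ``defect'' coming from the difference $n_1 r - k\Delta < r$). I would first quantify this: set $e = n_1 r - k\Delta \in \{0,1,\dots,r-1\}$ and observe that $\sum_j (r - w_j) + \sum_i (\delta_i - \Delta) = e$, so both deficiencies are small.

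Next I would use the hypothesis $k \le 2r$, which means each block $supp({\bf p}_j)$ (of size $\le r$) can miss at most $r$ of the $k \le 2r$ coordinates — in fact when $k \le 2r$ any two full-weight blocks of size $r$ must intersect, and more generally the ground set $[k]$ is small enough that the packing $([k],\{supp({\bf p}_j)\})$ (after the deletions of Theorem \ref{thm_comb}, which only help) is forced to be nearly ``complete.'' The key combinatorial fact I want is that when $k \le 2r$ the pair-disjointness condition of a $(k,R,1)$ packing severely limits how many size-$\le r$ blocks can each contain a fixed element $i$ while pairwise meeting only in $i$: essentially the blocks through $i$ partition $[k]\setminus\{i\}$ into pieces of size $\le r-1$, so there are at most $\lceil (k-1)/(r-1)\rceil \le \lceil (2r-1)/(r-1)\rceil = 3$ of them when $r \ge 2$. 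Combined with $\delta_i \ge \delta - 1 \ge 3$, this pins $\delta_i$ essentially to $3$ for every $i$, hence $\Delta = 3$ and $\delta = 4$, and forces $n_1$ small and the blocks through each element to be an exact partition of $[k]\setminus\{i\}$ into pieces of size $r-1$ (up to the small defect $e$). At this point the structure is rigid enough to write $P$ out combinatorially.

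Finally, with this rigid structure in hand I would bound the Hamming weight of a row of $P$: row $i$ has weight $\delta_i$ plus the number of non-partial columns in which $i$ appears; since $\delta_i$ is forced to be about $3$ and $k \le 2r$ is small, I expect the total row weight of $G=(I\,|\,P)$ to be too small to be at least $d$, where $d$ is the optimal distance $d = n - k - \lceil k(\delta-1)/r\rceil + \delta$ from \eqref{eqn_bound_with_condition}. Concretely, optimality fixes $n$ in terms of $k,r,\delta$, hence fixes $d$, and one computes that the maximal row weight achievable under the rigid packing structure (roughly $\Delta$ plus a few non-partial incidences) falls below $d-1$, contradicting Lemma \ref{Lem_Weight}.

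\textbf{Main obstacle.} The delicate point will be handling the ``defect'' $e = n_1 r - k\Delta$ and the possible presence of non-partial check columns: these create a few exceptional rows/columns that deviate from the rigid pattern, and I must check that even the most favorable placement of the defect and of the $n_2$ non-partial columns cannot push some row weight up to $d$. I expect this to require a careful but elementary counting argument showing that each unit of defect buys at most a bounded increase in the best row weight, which is still insufficient once $\delta \ge 4$ forces $\Delta \ge 3$ while $k \le 2r$ keeps everything cramped. Getting the inequalities to close with the correct floor/ceiling bookkeeping (especially distinguishing $r \mid k$ from $r \nmid k$) is where the real work lies.
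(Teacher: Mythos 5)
There is a genuine gap, in two places. Your opening move is exactly the paper's: from the tightness of $n_1=\lceil k\Delta/r\rceil$ in \eqref{Eqn_delta} you get that the total weight deficiency $\sum_j(r-w_j)$ is at most $\lceil k\Delta/r\rceil r-k\Delta\le r-1$, which is precisely \eqref{eqn_theorem_case_2_special}. But your central combinatorial claim --- that for $k\le 2r$ there can be at most $\lceil (k-1)/(r-1)\rceil\le 3$ blocks through a fixed element $i$ --- is not valid for blocks of size \emph{at most} $r$: the pair-disjointness condition only gives $\sum_{B\ni i}(|B|-1)\le k-1$, which caps the number of blocks through $i$ only when those blocks are forced to be (nearly) full weight. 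The global defect bound does not force this locally, since all of the deficiency $\le r-1$ could sit on the blocks through the particular $i$ you chose; running your argument on a single element then yields only $k-1\ge (\delta-1)(r-1)-(r-1)=(\delta-2)(r-1)\ge 2(r-1)$, i.e.\ $k\ge 2r-1$, which falls short of $k>2r$. (Also note the direction of your ceiling estimate: a partition of $[k]\setminus\{i\}$ into pieces of size $\le r-1$ has \emph{at least} $\lceil(k-1)/(r-1)\rceil$ pieces, not at most.)

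The second problem is your endgame: you propose to contradict Lemma \ref{Lem_Weight} by computing $d$ from the optimality of the code via \eqref{eqn_bound_with_condition}. But Lemma \ref{Lem_Case2} does not assume optimality --- its only hypotheses are $n_1=\lceil k\Delta/r\rceil$ and $\delta\ge 4$ --- so $d$ is not pinned down and that contradiction is unavailable. The paper avoids both difficulties with a purely combinatorial double count over \emph{two} distinct elements $i_1,i_2$: their two families of $\delta-1$ repair groups can share at most one partial check column (since $i_1,i_2$ lie together in at most one support), so the combined weight deficiency over both families is strictly less than $2(r-1)$ by \eqref{eqn_theorem_case_2_special}; adding the two covering inequalities $k\ge 1+\sum_t(w_t'-1)$ and $k\ge 1+\sum_t(w_t''-1)$ then gives $2k>2+2(\delta-2)(r-1)$, which with $\delta\ge 4$ forces the conclusion. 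If you want to salvage your approach, you should replace the single-element block count and the distance argument with this two-element step.
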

\begin{proof}
Assume that there are $m$ partial check columns ${\bf p}_{j_1},\cdots,{\bf p}_{j_m}$ respectively having the Hamming weight  $w_{j_1},w_{j_2},\cdots,$ $w_{j_m}$,
which are all less than $r$. Then, by \eqref{Eqn_delta}
\begin{eqnarray*}
k\Delta\le \sum_{j=1}^{n_1}w_j=(n_1-m)r+\sum_{t=1}^{m}w_{j_t}=\left\lceil{k\Delta \over r}\right\rceil r-\sum_{t=1}^{m}(r-w_{j_t})
\end{eqnarray*}
 Thus, we have
\begin{equation}\label{eqn_theorem_case_2_special}
 \sum_{t=1}^{m}(r-w_{j_t})\leq \left\lceil{k\Delta \over r}\right\rceil r-k\Delta\leq r-1
\end{equation}

Let $i_1$ and $i_2$ be two distinct integers in $[k]$. Firstly respectively consider the  $\delta-1$ repair groups of information symbols $c_{i_1}$ and $c_{i_2}$. Suppose that the Hamming weights of the corresponding
$2(\delta-1)$ partial check columns ${\bf p}_{1}',\cdots,{\bf p}_{\delta-1}'$ and ${\bf p}_{1}'',\cdots,{\bf p}_{\delta-1}''$  are $w_1',\cdots, w_{\delta-1}'$ and $w_1'',\cdots, w_{\delta-1}''$ respectively. Definition \ref{r_delta_i_one} tells us that each element $i\ne i_1$ (resp. $i\ne i_2$) appears at most once in the
support sets of partial check columns  $supp({\bf p}_{1}'),\cdots,supp({\bf p}_{\delta-1}')$ (resp. $supp({\bf p}_{1}''),\cdots,supp({\bf p}_{\delta-1}'')$), i.e.,
\begin{eqnarray*}
k\ge 1+ \sum_{t=1}^{\delta-1}(w_t'-1),~k\ge 1+ \sum_{t=1}^{\delta-1}(w_t''-1)
\end{eqnarray*}
and hence
\begin{eqnarray*}
2k &\ge & 2+ \sum_{t=1}^{\delta-1}(w_t'+w_t''-2)\nonumber\\
&=&  2+ 2(\delta-1)(r-1)-\sum_{t=1}^{\delta-1}(r-w_t')-\sum_{t=1}^{\delta-1}(r-w_t'')\label{Eqn_k_bound}
\end{eqnarray*}

Secondly, consider the $2(\delta-1)$ support sets of these partial check columns. Definition \ref{r_delta_i_one} implies that  $i_1$ and $i_2$ appear simultaneously in at most one of them, i.e.,
\begin{eqnarray*}
\rho=|\{{\bf p}_{1}',\cdots,{\bf p}_{\delta-1}'\}\cap \{{\bf p}_{1}'',\cdots,{\bf p}_{\delta-1}''\}|\le 1
\end{eqnarray*}
W.L.O.G., set ${\bf p}_{1}'={\bf p}_{1}''$ if $\rho=1$, which implies $w_1'=w_1''\ge 2$ because of  $i_1,i_2\in supp({\bf p}_{1}')$.

Thus by \eqref{eqn_theorem_case_2_special} we obtain
\begin{equation*}
  \sum_{t=1}^{\delta-1}(r-w_t')+\sum_{t=1}^{\delta-1}(r-w_t'')\le \sum_{t=1}^{m}(r-w_{j_t})+\rho(r-w_1'')<2(r-1)
\end{equation*}
 which gives
\begin{eqnarray*}
2k> 2+ 2(\delta-1)(r-1)- 2(r-1)=2+ 2(\delta-2)(r-1)
\end{eqnarray*}
i.e.,  $k> 2r$ if $\delta\geq 4$.
\end{proof}

\begin{theorem}\label{theorem_num_n_1}
For any optimal $[n,k,d]_q$  systematic code with information $(r,\delta)_c$-locality, $n_1= \left\lceil\frac{k(\delta-1)}r\right\rceil$
if one of the following conditions holds
\begin{itemize}
\item [C1.] $\delta\ge 4$;
\item [C2.] $\delta=3$, $k\ge 2r$, or ($k=r+\kappa$ and ($r/3\le \kappa<r/2$ or $2r/3\le \kappa<r$));
\item [C3.] $\delta=2$, $k\ge 2r$, or ($k=r+\kappa$ and $r/2\le \kappa<r$).
\end{itemize}
\end{theorem}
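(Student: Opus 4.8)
The goal is to show that for an optimal code the inequality $n_1 \ge \lceil k\Delta/r\rceil$ in \eqref{Eqn_n1} is in fact tight, i.e. $n_1 = \lceil k(\delta-1)/r\rceil$, under each of the stated parameter regimes. The first observation is that $\Delta = \delta-1$ for an optimal code: indeed if $\Delta \ge \delta$ then every information symbol lies in at least $\delta$ supports of partial check columns, which by the alternative Singleton-type counting forces $n_1$ (and hence $n$) to be strictly larger than what optimality of $d$ with respect to \eqref{eqn_bound_with_condition} permits. So I would first prove $\Delta = \delta-1$, reducing the claim to $n_1 = \lceil k(\delta-1)/r\rceil$, and then the content is an \emph{upper} bound $n_1 \le \lceil k(\delta-1)/r\rceil$ matching \eqref{Eqn_n1}.

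The main engine is a counting/weight argument tying $n_1$ to the minimum distance $d$. By Lemma \ref{new_bound} optimality means $d = n - k - \lceil k(\delta-1)/r\rceil + \delta$, and recall $n = n_1 + n_2 + k$, so $n_1 = d + \lceil k(\delta-1)/r\rceil - \delta - n_2$; thus $n_1 = \lceil k(\delta-1)/r\rceil$ is equivalent to $n_2 = d - \delta$. So I would recast the theorem as: an optimal code has exactly $d-\delta$ non-partial check columns. One direction, $n_2 \ge d-\delta$, should follow from a Singleton-type shortening argument (delete all partial check columns and $k - \text{something}$ coordinates and bound the residual distance — this is essentially how \eqref{eqn_bound_with_condition} is proved). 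The reverse inequality $n_2 \le d-\delta$ is the crux: I would suppose $n_2 \ge d-\delta+1$ and derive a contradiction with $d$ being as large as \eqref{eqn_bound_with_condition}, by exhibiting a low-weight codeword. Here is where the parameter cases enter: one uses Lemma \ref{Lem_Weight} (every row of $P$ has weight $\ge d-1$) together with the structure of the $\delta-1$ disjoint repair groups per symbol. Summing row weights of $P$ gives $\sum_j w_j \ge k(d-1)$, while $\sum_j w_j \le n_1 r + (\text{contribution of non-partial columns})$; combining with $n_1 = \lceil k(\delta-1)/r\rceil$ assumed-not, and feeding in Lemma \ref{Lem_Case2} in the $\delta\ge 4$ case (which guarantees $k > 2r$ precisely when $n_1$ is minimal), one should close the gap.

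For the low-$\delta$ cases I expect the argument to be more delicate and genuinely case-dependent. When $\delta = 3$ or $\delta = 2$ and $k = r + \kappa$ with $\kappa$ in the stated sub-ranges, the ceiling $\lceil k(\delta-1)/r\rceil$ takes a small explicit value ($2$ or $3$), and the restrictions on $\kappa$ are exactly what is needed so that the averaging inequality \eqref{Eqn_delta} cannot be satisfied with fewer partial check columns without violating either the packing condition (Lemma \ref{lem_comb}: a pair in $t>1$ blocks forces each of its elements into $\ge \delta+t-2$ blocks) or the distance bound. Concretely, for these cases I would argue directly: if $n_1 = \lceil k(\delta-1)/r\rceil - 1$ were possible, the few partial check columns available would have to cover $[k]$ with each element of multiplicity $\ge \delta-1$, and a short counting argument on $\sum \delta_i = \sum w_j$ together with $w_j \le r$ and $|[k]| = r+\kappa$ pins down the multiset of column weights so tightly that Lemma \ref{lem_comb} is violated, or a codeword of weight $< d$ appears. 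The threshold values $r/3, r/2, 2r/3$ on $\kappa$ are the breakpoints of these divisibility-type obstructions.

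**Main obstacle.** The hard part will be the reverse inequality $n_2 \le d-\delta$ (equivalently ruling out $n_1$ smaller than the minimal value), and in particular organizing the low-$\delta$ sub-cases so that the restrictions on $\kappa$ drop out naturally from a uniform counting inequality rather than an ad hoc case split. I anticipate the argument will need to track simultaneously: the weights $w_j$ of partial check columns, the disjointness of the $\delta-1$ repair groups of each symbol (Lemma \ref{lem_comb}), and the global weight bound from Lemma \ref{Lem_Weight}; the interplay of these three is what makes the theorem nontrivial, and getting a clean contradiction — as opposed to one that only works for "most" parameters — is where the real work lies. The $\delta \ge 4$ case should be comparatively smooth once Lemma \ref{Lem_Case2} is invoked, since $k > 2r$ gives enough room in the counting.
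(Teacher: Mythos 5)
Your plan inverts which half of the equality is actually hard. Writing $n-k=n_1+n_2=d+\lceil k(\delta-1)/r\rceil-\delta$ (from optimality), the inequality $n_2\le d-\delta$ that you call ``the crux'' is equivalent to $n_1\ge\lceil k(\delta-1)/r\rceil$, and that is already immediate from the double counting \eqref{Eqn_delta}--\eqref{Eqn_n1} together with $\Delta\ge\delta-1$; no low-weight-codeword construction or case analysis is needed there. The genuinely hard half is the opposite bound $n_1\le\lceil k(\delta-1)/r\rceil$, which you dispose of in one line as ``a Singleton-type shortening argument'' and, separately, fold into the unproved ``first observation'' that $\Delta=\delta-1$. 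That observation is exactly where the whole content of the theorem sits: it is not true for an arbitrary optimal code, and the hypotheses C1--C3 exist precisely to force it. The paper's mechanism is concrete: pick $i_0$ with $\delta_{i_0}=\Delta=\delta-1+l$; row $i_0$ of $P$ can be nonzero only in the $\Delta$ partial check columns containing $i_0$ and the $n_2$ non-partial ones, so Lemma \ref{Lem_Weight} gives $\Delta+n_2\ge d-1$ and hence $n_1\le l+\lceil k(\delta-1)/r\rceil$; squeezing this against $n_1\ge\lceil k(\delta-1+l)/r\rceil$ forces $\lceil k(\delta-1+l)/r\rceil-\lceil k(\delta-1)/r\rceil=l$, and C1--C3 (with Lemma \ref{Lem_Case2} supplying $k>2r$ when $\delta\ge4$) are exactly the conditions under which this ceiling identity fails for $l>0$. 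Your proposal never produces this upper bound on $n_1$, never explains how C1--C3 would enter a proof of $\Delta=\delta-1$, and spends its case analysis on ruling out ``$n_1$ smaller than the minimal value,'' which costs nothing.

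The parts of your outline that do align with the paper are the recasting of the claim through the optimality equation $n=k+n_1+n_2$ and the anticipated roles of Lemmas \ref{Lem_Weight} and \ref{Lem_Case2}; but to close the gap you must (i) apply the row-weight bound to the specific minimizing element $i_0$ to get the upper bound on $n_1$ in terms of $l$, and (ii) carry out the elementary ceiling-function comparison that turns each of C1, C2, C3 into a contradiction with $l>0$. Without these two steps the argument does not go through.
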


\begin{proof} Firstly,  $\Delta\geq \delta-1$ by \eqref{eqn_Delta}. Set $\Delta=\delta-1+l$ for integer $l\ge 0$.
Applying \eqref{eqn_bound_with_condition} in the place of $n=k+n_1+n_2$, we have
\begin{equation}\label{eqn_case_t}
 n_2= d +\left\lceil\frac{k(\delta-1)}{r}\right\rceil-\delta-n_1
\end{equation}
By \eqref{Eqn_def_Delta}, there exists an element $i_0\in[k]$ satisfying $\delta_{i_0}=\Delta$. That is,
 the weight of  row $i_0$ of the matrix $P$ is at most $\Delta+n_2$. It then follows from Lemma \ref{Lem_Weight} that
\begin{eqnarray}\label{eqn_delta}
\delta-1+l+n_2=\Delta+n_2\geq d-1
\end{eqnarray}

Applying \eqref{eqn_case_t} to \eqref{eqn_delta}, we get
\begin{equation}\label{eqn_case_delta1}
 n_1\le l+\left\lceil\frac{k(\delta-1)}{r}\right\rceil \le \left\lceil\frac{k(\delta-1+l)}{r}\right\rceil
\end{equation}
since $r\le k$.
On the other hand, by \eqref{Eqn_n1}, we have
\begin{equation}\label{eqn_case_delta2}
n_1\geq\left\lceil\frac{k(\delta-1+l)}{r}\right\rceil
\end{equation}
Combining \eqref{eqn_case_delta1} and \eqref{eqn_case_delta2}, we  then arrive at
\begin{equation}\label{eqn_case_delta3}
n_1=\left\lceil\frac{k(\delta-1+l)}{r}\right\rceil=l+\left\lceil\frac{k(\delta-1)}{r}\right\rceil
\end{equation}

Next we show that $l=0$ if C1 or C2 or C3 holds.  Otherwise if $l>0$,  by  \eqref{eqn_case_delta3}
\begin{eqnarray*}
\left\lceil\frac{k(\delta-1+l)}{r}\right\rceil -\left\lceil\frac{k(\delta-1)}{r}\right\rceil &=& l
\end{eqnarray*}
However, it is easily checked that the left hand side of the above equality  is larger than $l$ if (i) $k\ge 2r$; or (ii)  $\delta=3$, $k=r+\kappa$, ($r/3\le \kappa<r/2$ or $2r/3\le \kappa<r$)); (iii) $\delta=2$, $k=r+\kappa$, $r/2\le \kappa<r$. Recall from Lemma 6 that \eqref{eqn_case_delta3} and C1 (i.e., $\delta\ge 4$) lead to $k> 2r$. That is, there is always a contradiction for any one of C1, C2 and C3, which finishes the proof.
\end{proof}

Based on Theorem \ref{theorem_num_n_1}, we are able to get the following result.

\begin{theorem}\label{theorem_number_points}
Assume that C1 or C2 or C3 holds. Let $\mathcal{C}$ be an optimal $[n,k,d]_q$ systematic code with information $(r,\delta)_c$-locality.
If there exist $m$ elements $i_1,i_2,\cdots,i_m\in[k]$ such that $\delta_{i_t}>\delta-1$ for $1\leq t\leq m$, then
\begin{equation}\label{eqn_for_num_special_dots}
        m\leq \left\lceil\frac{k(\delta-1)}{r}\right\rceil r-k(\delta-1)
\end{equation}
\end{theorem}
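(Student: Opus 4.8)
The plan is to run a double-counting argument on the quantities $\delta_i$ already introduced, using two facts supplied by Theorem \ref{theorem_num_n_1}: that $n_1=\lceil k(\delta-1)/r\rceil$ under C1/C2/C3, and — crucially — that the parameter $l$ in its proof is forced to be $0$, which is exactly the statement $\Delta=\delta-1$. So the first step I would take is to record $\Delta=\delta-1$ for any optimal systematic code satisfying C1, C2 or C3; this is an immediate consequence of \eqref{eqn_case_delta3} together with the argument showing $l=0$.

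Next I would set up the lower bound on $\sum_{i=1}^{k}\delta_i$. By the definition of $\Delta$ in \eqref{Eqn_def_Delta} and the equality $\Delta=\delta-1$, every element $i\in[k]$ satisfies $\delta_i\ge\delta-1$; moreover, for the $m$ distinguished elements $i_1,\dots,i_m$ we are told $\delta_{i_t}>\delta-1$, hence $\delta_{i_t}\ge\delta$ since the $\delta_i$ are integers. Summing, one gets
\begin{equation*}
\sum_{i=1}^{k}\delta_i\ \ge\ (k-m)(\delta-1)+m\delta\ =\ k(\delta-1)+m.
\end{equation*}

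Then I would invoke the upper bound from \eqref{Eqn_delta}, namely $\sum_{i=1}^{k}\delta_i=\sum_{j=1}^{n_1}w_j\le n_1 r$, and substitute $n_1=\lceil k(\delta-1)/r\rceil$ from Theorem \ref{theorem_num_n_1}. Combining the two bounds yields
\begin{equation*}
k(\delta-1)+m\ \le\ \left\lceil\frac{k(\delta-1)}{r}\right\rceil r,
\end{equation*}
which rearranges to the claimed inequality \eqref{eqn_for_num_special_dots}. The argument is short and essentially mechanical; the only subtle point — and the place I would be most careful — is the first step, extracting $\Delta=\delta-1$ from the proof of Theorem \ref{theorem_num_n_1} rather than from its statement, since the displayed conclusion there only mentions $n_1$. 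If one prefers not to rely on reading $l=0$ out of that proof, an alternative is to note that $\Delta=\delta-1+l$ with $n_1=\lceil k(\delta-1)/r\rceil=l+\lceil k(\delta-1)/r\rceil$ forces $l=0$ directly, and then proceed as above.
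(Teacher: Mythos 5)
Your proof is correct and is essentially the paper's own argument: a double count of $\sum_{i=1}^{k}\delta_i$, bounded below by $k(\delta-1)+m$ via $\delta_i\ge\delta-1$ and bounded above by $n_1 r$ with $n_1=\left\lceil k(\delta-1)/r\right\rceil$ from Theorem \ref{theorem_num_n_1}, exactly as in the paper (which phrases it as $m\le\sum_{i=1}^{k}[\delta_i-(\delta-1)]$). The one point you flag as delicate --- extracting $\Delta=\delta-1$ from the proof of Theorem \ref{theorem_num_n_1} --- is unnecessary: the lower bound $\delta_i\ge\delta-1$ for every $i\in[k]$ follows directly from Definition \ref{r_delta_i_one} (as the paper notes just before \eqref{eqn_Delta}), so you only need the stated conclusion $n_1=\left\lceil k(\delta-1)/r\right\rceil$.
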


\begin{proof} Note that $\delta_{i}>\delta -1$ if $i\in \{i_1,\cdots,i_m\}$ and $\delta_{i}=\delta -1$ otherwise. Thus,
we have
 \begin{equation*}
  m\leq \sum_{t=1}^{m}[\delta_{i_t}-(\delta-1)]=\sum_{i=1}^{k}[\delta_{i}-(\delta-1)]=\sum_{i=1}^{k}\delta_{i}-k(\delta-1)
  \leq \left\lceil\frac{k(\delta-1)}{r}\right\rceil r-k(\delta-1)
 \end{equation*}
 where the last inequality holds by applying Theorem \ref{theorem_num_n_1} to \eqref{Eqn_delta}.
\end{proof}

For example, we can check that the code in Example \ref{exa2} is an optimal $[14,8,3]$ code with information $(3,3)_c$-locality. From Theorem \ref{theorem_number_points} there are at most one pair of points that appears in more than one support sets of ${\bf p}_i$ for $1\leq i\leq n_1$. That is, $\{1,6\}$ occurs in $supp({\bf p}_5)$ and $supp({\bf p}_6)$.

By Theorem \ref{theorem_number_points}, we know that we only need to delete
 at most $\left\lceil k(\delta-1)/ r\right\rceil r-k(\delta-1)-1<r-1$
elements from the support sets of all the partial check columns to form a packing when C1 or C2 or C3 is satisfied.
Specifically,
in the following two cases, we do not need any deletion. In other words, the support sets of all
the partial check columns form a packing natively.

\begin{corollary}\label{corollary1}
Assume that C1 or C2 or C3 holds.  For any  $[n,k,d]_q$ optimal systematic code $\mathcal{C}$ with information $(r,\delta)_c$-locality,
the support sets of the partial check columns in $\mathcal{C}$ form a $(\delta-1)$-regular $(k,r,1)$ packing if $r|k(\delta-1)$.
\end{corollary}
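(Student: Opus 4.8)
The plan is to combine three facts already in hand: the exact count $n_1=\lceil k(\delta-1)/r\rceil$ from Theorem \ref{theorem_num_n_1}, the bound on the number of ``heavy'' coordinates from Theorem \ref{theorem_number_points}, and the automatic-packing criterion of Corollary \ref{thm1-cor}. Since C1 or C2 or C3 is assumed, Theorem \ref{theorem_num_n_1} applies, and the divisibility hypothesis $r\mid k(\delta-1)$ removes the ceiling, giving $n_1=k(\delta-1)/r$.

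First I would argue that every coordinate of $[k]$ lies in exactly $\delta-1$ support sets of partial check columns. Applying Theorem \ref{theorem_number_points} under the same standing assumption, the number $m$ of coordinates $i$ with $\delta_i>\delta-1$ satisfies $m\le \lceil k(\delta-1)/r\rceil r-k(\delta-1)$, and this right-hand side equals $0$ when $r\mid k(\delta-1)$; hence $m=0$. Together with $\delta_i\ge\delta-1$ for all $i$ (Definition \ref{r_delta_i_one}, cf. \eqref{eqn_Delta}), this forces $\delta_i=\delta-1$ for every $i\in[k]$. In particular $m=0\le 1$, so Corollary \ref{thm1-cor} applies verbatim and $([k],\{supp({\bf p}_j)\}_{j=1}^{n_1})$ is already a $(k,R,1)$ packing, whose $(\delta-1)$-regularity is exactly the equality $\delta_i=\delta-1$ just obtained.

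It remains to pin down $R=\{r\}$, i.e., that every block has size exactly $r$. For this I would revisit the double count in \eqref{Eqn_delta}: writing $w_j$ for the Hamming weight of ${\bf p}_j$, we have $\sum_{j=1}^{n_1}w_j=\sum_{i=1}^{k}\delta_i=k(\delta-1)=n_1 r$, while each $w_j\le r$ since ${\bf p}_j$ is a partial check column. Equality in $\sum_{j=1}^{n_1}w_j\le n_1 r$ then forces $w_j=r$ for all $j\in[n_1]$, so all blocks have cardinality $r$. Combining this with the regularity established above yields the claimed $(\delta-1)$-regular $(k,r,1)$ packing. The argument is essentially bookkeeping; the only point requiring care is that both Theorem \ref{theorem_num_n_1} and Theorem \ref{theorem_number_points} are invoked under their hypotheses C1/C2/C3, which is precisely the standing assumption of the corollary, so no gap arises.
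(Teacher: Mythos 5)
Your proof is correct and follows essentially the same route as the paper: both deduce $m=0$ from Theorem \ref{theorem_number_points} under the divisibility hypothesis, invoke Corollary \ref{thm1-cor} to get the packing, and then use the double count $\sum_j w_j = k(\delta-1) = n_1 r$ with $w_j\le r$ to force all blocks to have size exactly $r$. Your write-up is merely a bit more explicit about why $m=0$ and where the $(\delta-1)$-regularity comes from.
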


\begin{proof}
In this case, $m=0$ in \eqref{eqn_for_num_special_dots}. Thus by Corollary \ref{thm1-cor}, the support sets of partial check columns naturally form a $(\delta-1)$-regular $(k,R,1)$ packing. Further by Theorem \ref{theorem_num_n_1}, $\sum\limits_{j=1}^{n_1} w_j=k(\delta-1)=n_1 r$ which gives $w_j= r$ for all $j\in[n_1]$ since $w_j\le r$ for all $j\in [n_1]$. Hence, the resultant packing is a $(\delta-1)$-regular $(k,r,1)$ packing.
\end{proof}

\begin{corollary}\label{corollary2}
Assume that C1 or C2 or C3 holds.  For any optimal systematic code with information $(r,\delta)_c$-locality,
then the support sets of the partial check columns in $\mathcal{C}$ form
\begin{itemize}
  \item a $(k,r,1)$ packing with $\lceil k(\delta-1)/r \rceil$ blocks; or
  \item a  $(\delta-1)$-regular $(k,\{r,r-1\},1)$  packing   having exactly one block of size $r-1$
\end{itemize}
provided that $k(\delta-1)\equiv r-1\pmod{r}$.
\end{corollary}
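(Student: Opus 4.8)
The plan is to mimic the proof of Corollary \ref{corollary1}, but now track carefully the single unit of ``slack'' that arises when $k(\delta-1)\equiv r-1\pmod r$. First I would invoke Theorem \ref{theorem_number_points}: under C1, C2 or C3, the number $m$ of elements of $[k]$ whose occurrence $\delta_i$ exceeds $\delta-1$ satisfies $m\le \lceil k(\delta-1)/r\rceil r-k(\delta-1)=1$ in the present congruence case. So either $m=0$ or $m=1$. In either case at most one element of $[k]$ lies in more than $\delta-1$ support sets of partial check columns, hence Corollary \ref{thm1-cor} applies directly and $([k],\{supp({\bf p}_j)\}_{j=1}^{n_1})$ is already a $(k,R,1)$ packing with no deletion needed, with $\max(R)\le r$ by the definition of partial check column. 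By Theorem \ref{theorem_num_n_1}, $n_1=\lceil k(\delta-1)/r\rceil$, which gives the first bullet immediately.

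For the second bullet I would split on the value of $m$. If $m=0$, then every $\delta_i=\delta-1$, so $\sum_{j=1}^{n_1}w_j=\sum_{i=1}^k\delta_i=k(\delta-1)=n_1r-1$; combined with $w_j\le r$ for all $j$, this forces exactly one block to have size $r-1$ and all the rest size $r$, and each element of $[k]$ occurs in exactly $\delta-1$ blocks, i.e.\ we get a $(\delta-1)$-regular $(k,\{r,r-1\},1)$ packing with exactly one block of size $r-1$. If $m=1$, say $\delta_{i_0}=\delta$ and $\delta_i=\delta-1$ for $i\ne i_0$, then $\sum_{j=1}^{n_1}w_j=k(\delta-1)+1=n_1r$, so $w_j=r$ for all $j$ and the packing is a $(k,r,1)$ packing with $n_1$ blocks --- which is exactly the first bullet again, and it is consistent with the ``or'' in the statement (when $m=1$ the second alternative simply does not occur, and the code realizes the first one).

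The only genuinely delicate point is confirming that in the $m=0$ case the resulting packing is honestly $(\delta-1)$-regular: this is just the observation that $\delta_i$ is by definition the number of support sets of partial check columns containing $i$, so $\delta_i=\delta-1$ for all $i$ says precisely that each element occurs in exactly $\delta-1$ blocks. Everything else is bookkeeping with the weight identity $\sum_j w_j=\sum_i\delta_i$ from \eqref{Eqn_delta} and the bound $n_1=\lceil k(\delta-1)/r\rceil$ from Theorem \ref{theorem_num_n_1}. I do not anticipate any obstacle beyond keeping the two cases $m\in\{0,1\}$ cleanly separated; the main work has already been done in Theorems \ref{theorem_num_n_1} and \ref{theorem_number_points}.
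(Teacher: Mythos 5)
Your proposal is correct and follows essentially the same route as the paper: invoke Theorem \ref{theorem_number_points} to get $m\le 1$, apply Corollary \ref{thm1-cor} to see no deletion is needed, use $n_1=\lceil k(\delta-1)/r\rceil$ from Theorem \ref{theorem_num_n_1}, and then split on $m\in\{0,1\}$ via the identity $\sum_j w_j=\sum_i\delta_i$, with $m=1$ yielding the first bullet and $m=0$ the second. The only cosmetic slip is the phrase ``gives the first bullet immediately'' before the case split (all blocks having size exactly $r$ requires $m=1$), but your subsequent analysis repairs this.
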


\begin{proof}
In this case, we have
\begin{eqnarray}\label{Eqn_r-1}
m=\sum\limits_{i=1}^k [\delta_{i}-(\delta-1)]\le\lceil k(\delta-1)/r \rceil r-k(\delta-1)=1
\end{eqnarray}
Thus by Corollary \ref{thm1-cor}, the support sets of partial check columns naturally form a $(k,R,1)$ packing.

Note that
\begin{equation*}
  \sum\limits_{i=1}^{n_1}(r-w_i)=n_1r-\sum\limits_{i=1}^{n_1}w_i=\lceil k(\delta-1)/r \rceil r-k(\delta-1)-\sum\limits_{i=1}^k [\delta_{i}-(\delta-1)]=1-m
\end{equation*}

For $m=1$, we have $\sum\limits_{i=1}^{n_1}(r-w_i)=0$ which gives $w_j=r$ for all $j\in [n_1]$. Thus the resultant packing is a $(k,r,1)$ packing with $\lceil k(\delta-1)/r \rceil$ blocks.

For $m=0$, we have $\sum\limits_{i=1}^{n_1}(r-w_i)=1$, thus there exists one element $j_0\in[n_1]$ such that $w_{j_0}=r-1$ and $w_j=r$ for $j\in[n_1]\setminus\{j_0\}$. The resultant packing is a $(\delta-1)$-regular $(k,\{r,r-1\},1)$  packing with exactly one block of size $r-1$, since $m=0$ means $\delta_i=\delta-1$ for $i\in [n_1]$.
\end{proof}

\section{Optimal locally repairable codes from Packing}\label{sec-construction}

By  Theorem \ref{theorem_num_n_1}, $n_1=\lceil k(\delta-1)/r \rceil$ for most optimal $(r,\delta)_c$-locally repairable codes. This implies that the $n_1$ support sets of the partial check columns contain  at least $k(\delta-1)$ elements. Further by Theorem \ref{theorem_number_points} and Algorithm 1, we can obtain packing by deleting at most $r-1$ elements. Since $r-1$ is relatively small compared with $k(\delta -1)$, it is naturally to ask whether packing can be used to construct locally repairable codes conversely. In this section, we answer this issue in two cases $n_2=0$ and $n_2>0$, respectively.

\subsection{The case $n_2=0$}

In this subsection, we assume that $n_2=0$.

{\bf Construction A:} For any positive integers $k$ and $r$, if there exists a $(k,R,1)$ packing,
$(X,\mathcal{B})$ with $\mathcal{B}=\{B_1,B_2,\cdots,B_{n_1}\}$, then a code $\mathcal{C}$ can be generated
 by the following $k\times(k+n_1)$ matrix
\begin{equation}\label{eqn_def_generator_m_A}
G=\left({\bf e}_1,{\bf e}_2,\ldots,{\bf e}_{k}\ |\ {\bf p}_1,{\bf p}_2,\cdots,{\bf p}_{n_1}\right),
\end{equation}
where ${\bf p}_i=(p^i_1,p^i_2,\cdots p^i_k)^\top$ is the $k$-dimensional vector defined as
\begin{eqnarray*}
p^i_j=\left\{
\begin{array}{ll}
1,\,\,\,\,&{\rm if}\,\,j\in B_i\\
0,\,\,\,\,&{\rm otherwise}
\end{array}
\right.
\end{eqnarray*}
for $1\leq i\leq n_1$ and $\top$ is the transpose operator.

\begin{remark} (i) Since $G$ in \eqref{eqn_def_generator_m_A} is a binary matrix, the resultant code $\mathcal{C}$ can be as simple as a binary code.

(ii) In \cite{PHO},  codes with information $(r,\delta)_c$-locality were constructed via
partial geometry. In fact, partial geometry is a special case of packing but with very strict restriction of parameters so that
only few instances are known till now \cite{CD}. In this sense, Construction A is a generalization of the one in \cite{PHO}.
\end{remark}

From Lemma \ref{new_bound} and Theorem \ref{thm_comb}, the following result can be obtained.

\begin{theorem}\label{theorem_cons_A}
$\mathcal{C}$ generated in Construction A is a $[k+n_1,k,d]_q$ systematic code with $d=\delta$ and information $(r,\delta)_c$-locality where  $\delta-1=\min\limits_{i\in X}|\{j\ |\ i\in B_j\}|$ and $r=\max{R}$. Further,
\begin{itemize}
     \item $\mathcal{C}$ is an optimal systematic code with information $(r,\delta)_c$-locality if and only if $n_1=\left\lceil k(\delta-1)/ r\right\rceil$;
     \item $\mathcal{C}$ has the optimal update-efficiency if $(X,\mathcal{B})$ is a $(\delta-1)$-regular $(k,R,1)$ packing.
\end{itemize}
\end{theorem}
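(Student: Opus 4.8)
The plan is to verify each claim in Theorem~\ref{theorem_cons_A} by directly working with the generator matrix $G=(I_k\mid P)$ produced by Construction~A, using Theorem~\ref{thm_comb} to transfer the packing hypotheses into the locality language and Lemma~\ref{new_bound} for optimality. First I would establish that $\mathcal{C}$ has information $(r,\delta)_c$-locality: by construction the support sets of the partial check columns are exactly the blocks $\{B_j\}_{j=1}^{n_1}$, which form a $(k,R,1)$ packing with $\max(R)=r$ and, since $\delta-1=\min_{i\in X}|\{j:i\in B_j\}|$, with every element of $[k]$ lying in at least $\delta-1$ blocks. This is precisely the ``if'' direction of Theorem~\ref{thm_comb} (here $n_2=0$, so all check columns are partial), so $\mathcal{C}$ has information $(r,\delta)_c$-locality. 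The length is $n=k+n_1$ and the dimension is $k$ because $G$ contains $I_k$.

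Next I would pin down $d=\delta$. For the upper bound $d\le\delta$: since the packing property forces any two distinct columns of $P$ to share at most one row index in their supports, and each element $i$ lies in at least $\delta-1$ blocks, pick $i$ achieving the minimum $\delta-1$; then row $i$ of $G$ has weight $1+\delta_i=\delta$ (contributing the $1$ from ${\bf e}_i$ and one from each of the $\delta-1$ blocks containing $i$ — actually weight $1+(\delta-1)=\delta$ only when $\delta_i=\delta-1$, which holds for the minimizing $i$), so $G$ has a row of weight $\delta$, whence $d\le\delta$. Wait — I should be careful: a row of $G$ is a codeword, so its weight is $\ge d$, giving $\delta\ge d$, i.e. $d\le\delta$. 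For the lower bound $d\ge\delta$ I would argue that any nonzero codeword $xG$ with $x\in\mathbb{F}_q^k$ has weight at least $\delta$: if $\mathrm{supp}(x)$ has size $s\ge\delta$ we are done from the $I_k$ part alone; if $1\le s\le\delta-1$, I use the packing/locality structure — each information coordinate $i\in\mathrm{supp}(x)$ sits in $\delta-1$ blocks, of which at most $s-1\le\delta-2$ can also meet $\mathrm{supp}(x)\setminus\{i\}$ (two elements share at most one block), so at least one block $B_j$ containing $i$ meets $\mathrm{supp}(x)$ only in $i$, forcing the $j$-th check coordinate of $xG$ to be $x_i\cdot p^{\,i}_j=x_i\ne0$; counting these ``private'' check coordinates together with the $s$ information coordinates pushes the weight to at least $\delta$. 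This is the step I expect to be the main obstacle, since it requires a clean double-counting to guarantee enough distinct nonzero check coordinates rather than merely one per information coordinate; the resolvable/regular structure is not needed here, only the packing axiom and the $\ge\delta-1$ multiplicity.

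Then the optimality equivalence is almost immediate: with $d=\delta$ and $n=k+n_1$, the bound \eqref{eqn_bound_with_condition} reads $\delta=d\le n-k-\lceil k(\delta-1)/r\rceil+\delta = n_1-\lceil k(\delta-1)/r\rceil+\delta$, i.e. $n_1\ge\lceil k(\delta-1)/r\rceil$, with equality in the bound exactly when $n_1=\lceil k(\delta-1)/r\rceil$. Since \eqref{Eqn_n1} already gives $n_1\ge\lceil k\Delta/r\rceil\ge\lceil k(\delta-1)/r\rceil$, the code is optimal iff $n_1=\lceil k(\delta-1)/r\rceil$, as claimed. Finally, for the update-efficiency statement I would invoke Definition~\ref{Def_Update_New_1}: the update-efficiency equals the maximum Hamming weight of a row of $G$. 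If $(X,\mathcal{B})$ is a $(\delta-1)$-regular packing, every element of $[k]$ lies in exactly $\delta-1$ blocks, so row $i$ of $P$ has weight exactly $\delta-1$ for every $i$, and row $i$ of $G$ has weight exactly $1+(\delta-1)=\delta=d$. By Lemma~\ref{lemma_bound_updata} the update-efficiency is always $\ge d$, so it equals $d$, i.e. $\mathcal{C}$ has optimal update-efficiency by Definition~\ref{Def_Update_New}. I would close by remarking that regularity is used only in this last part; the other three assertions need only the packing structure and the minimum-multiplicity condition.
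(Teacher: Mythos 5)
Your proposal is correct and follows the paper's skeleton for three of the four claims (locality via Theorem~\ref{thm_comb}, the upper bound $d\le\delta$ from a weight-$\delta$ row of $G$, the optimality equivalence by substituting $n=k+n_1$ and $d=\delta$ into \eqref{eqn_bound_with_condition}, and update-efficiency from Definition~\ref{Def_Update_New_1} plus Lemma~\ref{lemma_bound_updata}). Where you genuinely diverge is the lower bound $d\ge\delta$. The paper proves it as an erasure-correction statement: for any set of at most $\delta-1$ erasures, each erased information symbol $c_i$ has $\delta-1$ pairwise disjoint repair groups, and since at most $\delta-2$ other symbols are erased, the Pigeonhole Principle yields a repair group free of erasures; the erased check symbols are then recomputed from the information symbols. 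You instead bound the weight of an arbitrary nonzero codeword $xG$ directly. The step you flag as the ``main obstacle'' does close cleanly, and it is worth recording the count: if $s=|\mathrm{supp}(x)|\le\delta-1$, then for each $i\in\mathrm{supp}(x)$ the packing axiom lets at most $s-1$ of the $\ge\delta-1$ blocks through $i$ meet $\mathrm{supp}(x)\setminus\{i\}$, so at least $\delta-s$ blocks are ``private'' to $i$, each contributing a check coordinate equal to $x_i\ne 0$; these private blocks are distinct across distinct $i$, so
\begin{equation*}
wt(xG)\;\ge\; s+s(\delta-s)\;=\;s(\delta-s+1)\;\ge\;\min\bigl(\delta,\,2(\delta-1)\bigr)\;=\;\delta
\end{equation*}
by concavity of $s\mapsto s(\delta-s+1)$ on $[1,\delta-1]$. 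So your route is valid once this inequality is written out; note that one private block per $i$ (which is all you explicitly claim) would only give weight $2s$ and fails at $s=1$ for $\delta>2$, so the full count $\delta-s$ per element is essential. Comparing the two: the paper's pigeonhole argument is shorter and also makes the operational repair procedure explicit, while your weight-counting argument is more self-contained linear algebra and actually shows the slightly stronger quantitative bound $wt(xG)\ge s(\delta-s+1)$ for low-weight messages.
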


\begin{proof} Firstly, $\mathcal{C}$ is a $[k+n_1,k,d]_q$ systematic code with information $(r,\delta)_c$-locally by Theorem \ref{thm_comb}.

Secondly, we prove that $d=\delta$. It is known from Construction A that there exists a row of $G$ with the Hamming weight $\delta$. That is, there is a codeword  in $\mathcal{C}$
with the Hamming weight $\delta$, which implies $d\leq \delta$. Therefore,  it is sufficient to show that $\mathcal{C}$ can tolerate any $\delta-1$ symbol erasures.
Let $D$ denote the set of all the erasure information symbols. For any information symbol $c_i\in D$,  note that
there are at least $\delta-1$ repair groups, say $R^{(i)}_1,R^{(i)}_2,\cdots,R^{(i)}_{\delta-1}$. Then by the Pigeonhole Principle, there must exist an integer $j\in [\delta-1]$ such that $(D\setminus\{c_i\})\bigcap R^{(i)}_j=\emptyset$ since  $|D\setminus \{c_i\}|\leq \delta -2$, which implies that repair group $ R^{(i)}_j$ can be used to repair the erasure information symbol $c_i$.  As for the erasure check symbols, they can be subsequently repaired by all the information symbols.

Then, by Lemma \ref{new_bound}, we conclude that $\mathcal{C}$ is optimal if and only if
\begin{equation*}
\delta= d = n_1+k-k-\left\lceil\frac{k(\delta-1)}{r}\right\rceil +\delta=n_1-\left\lceil\frac{k(\delta-1)}{r}\right\rceil+\delta
\end{equation*}
i.e., $n_1=\left\lceil\frac{k(\delta-1)}{r}\right\rceil$.

Finally, if the packing $(X,\mathcal{B})$ is a $(\delta-1)$-regular $(k,R,1)$, then $\mathcal{C}$ has update-efficiency $t=\delta=d$ by
Definition \ref{Def_Update_New_1} since each row of the generator matrix $G$ in \eqref{eqn_def_generator_m_A} built on the $(\delta-1)$-regular packing
has the Hamming weight $\delta$, which is optimal due to Definition \ref{Def_Update_New}.
\end{proof}

Combining  Theorem \ref{theorem_cons_A} with Corollaries \ref{corollary1} and \ref{corollary2}, we immediately have the following
sufficient and necessary conditions for the cases  $r|k(\delta-1)$ and $k(\delta-1)\equiv r-1\pmod{r}$.

\begin{corollary}\label{corollary_n_2=0_r_mid_k}
Assume that C1 or C2 or C3 holds. When $r|k(\delta-1)$ and $n_2=0$,  the systematic code has the optimal information $(r,\delta)_c$-locality and the optimal update-efficiency  if and only if the support sets of all the partial check symbols form a $(\delta-1)$-regular $(k,r,1)$ packing.
\end{corollary}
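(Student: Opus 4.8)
The plan is to combine Theorem~\ref{theorem_cons_A} with Corollaries~\ref{corollary1} and~\ref{corollary2} to pin down exactly which codes achieve both optimality conditions simultaneously, working separately in the two directions of the ``if and only if''. Throughout we are under the hypothesis $r\mid k(\delta-1)$ and $n_2=0$, so any systematic code $\mathcal{C}$ with information $(r,\delta)_c$-locality has length $n=k+n_1$.

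First the ``if'' direction. Suppose the support sets of the partial check columns form a $(\delta-1)$-regular $(k,r,1)$ packing with block set $\mathcal{B}=\{B_1,\dots,B_{n_1}\}$. Regularity forces $k(\delta-1)=\sum_{i=1}^{n_1}|B_i|=n_1 r$, hence $n_1=k(\delta-1)/r=\lceil k(\delta-1)/r\rceil$. Now I would invoke Theorem~\ref{theorem_cons_A}: the code built from this packing via Construction~A has $d=\delta$, information $(r,\delta)_c$-locality, is optimal with respect to the bound \eqref{eqn_bound_with_condition} precisely because $n_1=\lceil k(\delta-1)/r\rceil$, and has optimal update-efficiency because the packing is $(\delta-1)$-regular. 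The one subtlety to address is that the statement is about ``the systematic code'' whose partial check columns give this packing, not merely about the Construction~A code; but since $n_2=0$ every check column is partial, so the generator matrix is $(I_k\mid P)$ with $P$ exactly the incidence matrix of $\mathcal{B}$ up to column order and nonzero scaling of entries. The argument for $d=\delta$ in Theorem~\ref{theorem_cons_A} only used the support structure (Pigeonhole over the $\geq\delta-1$ disjoint repair groups), and optimal update-efficiency only needs every row of $P$ to have weight $\delta-1$, which $(\delta-1)$-regularity of the supports guarantees regardless of the field or the scalars. So optimality of both kinds follows for any such $\mathcal{C}$, not just the binary Construction~A instance.

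Now the ``only if'' direction, which is where Corollaries~\ref{corollary1} and~\ref{corollary2} do the work. Assume $\mathcal{C}$ is an optimal $[n,k,d]_q$ systematic code with information $(r,\delta)_c$-locality and optimal update-efficiency, with $n_2=0$ and $r\mid k(\delta-1)$, and assume C1 or C2 or C3. By Theorem~\ref{theorem_num_n_1}, $n_1=\lceil k(\delta-1)/r\rceil=k(\delta-1)/r$. Since $m=0$ in \eqref{eqn_for_num_special_dots} when $r\mid k(\delta-1)$, Corollary~\ref{corollary1} already tells us the support sets of the partial check columns form a $(\delta-1)$-regular $(k,r,1)$ packing---\emph{provided} we know no deletion by Algorithm~1 was needed, which is exactly what $m=0$ plus Corollary~\ref{thm1-cor} supply (at most one element lies in more than $\delta-1$ support sets, vacuously here since $m=0$ forces $\delta_i=\delta-1$ for all $i$, so the supports are honestly a packing on the nose). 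Then $\sum_{j=1}^{n_1}w_j=k(\delta-1)=n_1 r$ together with $w_j\le r$ forces $w_j=r$ for every $j$, giving the $(k,r,1)$ (uniform block size) property, and $\delta_i=\delta-1$ for all $i$ gives $(\delta-1)$-regularity. This is just Corollary~\ref{corollary1} verbatim, so the ``only if'' is essentially immediate once $n_1$ is fixed.

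The main obstacle, such as it is, is bookkeeping rather than mathematics: making sure the hypotheses of Theorem~\ref{theorem_num_n_1} (one of C1, C2, C3) are genuinely in force so that $n_1=\lceil k(\delta-1)/r\rceil$ can be asserted---without this, the ``only if'' direction breaks, since a non-optimal value $n_1=l+\lceil k(\delta-1)/r\rceil$ with $l>0$ would not force the uniform-weight, $(\delta-1)$-regular structure. I would also take care to note explicitly that optimal update-efficiency is what upgrades ``some packing after deletions'' to ``$(\delta-1)$-regular packing with no deletions'': if the supports were not already $(\delta-1)$-regular, some row of $P$ would have weight exceeding $\delta-1=d-1$, and combined with $n_2=0$ this would push the update-efficiency (the max row weight of $G$, by Definition~\ref{Def_Update_New_1}) strictly above $d$, contradicting optimal update-efficiency. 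That observation, plus the counting identities already recorded in the proof of Corollary~\ref{corollary1}, closes the equivalence.
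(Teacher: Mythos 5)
Your proposal is correct and follows essentially the same route as the paper, which obtains this corollary by combining Theorem~\ref{theorem_cons_A} (sufficiency, via $n_1=k(\delta-1)/r=\lceil k(\delta-1)/r\rceil$ and the $(\delta-1)$-regularity giving row weight $\delta=d$) with Corollary~\ref{corollary1} (necessity, via $m=0$ forcing $\delta_i=\delta-1$ and $w_j=r$). Your two added observations --- that the erasure/update arguments depend only on the supports and not on the binary scalars of Construction~A, and that optimal update-efficiency alone would force regularity --- are sound, the latter being redundant since Corollary~\ref{corollary1} already yields regularity from distance-optimality.
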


\begin{corollary}\label{theorem_Suf_Nece_condition}
Assume that C1 or C2 or C3 holds. When $k(\delta-1)\equiv r-1\pmod{r}$ and $n_2=0$, the systematic code has the  optimal information $(r,\delta)_c$-locality if and only if
 the support sets of the partial check symbol form
\begin{itemize}
  \item a $(k,r,1)$ packing with $\lceil k(\delta-1)/r \rceil$ blocks; or
  \item a  $(\delta-1)$-regular $(k,\{r,r-1\},1)$  packing   having exactly one block of size $r-1$
\end{itemize}
where the code corresponding to $(\delta-1)$-regular has the optimal update-efficiency as well.
\end{corollary}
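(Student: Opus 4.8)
The plan is to combine Theorem~\ref{theorem_cons_A} with Corollary~\ref{corollary2} in exactly the way Corollary~\ref{corollary_n_2=0_r_mid_k} combines Theorem~\ref{theorem_cons_A} with Corollary~\ref{corollary1}. I would split the argument into the two implications.

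\textbf{Sufficiency.} Assume the support sets of the $n_1$ partial check columns form either a $(k,r,1)$ packing with $\lceil k(\delta-1)/r\rceil$ blocks, or a $(\delta-1)$-regular $(k,\{r,r-1\},1)$ packing with exactly one block of size $r-1$. In either case this is a $(k,R,1)$ packing with $\max(R)\le r$; moreover I need to observe that in both cases every element of $[k]$ lies in at least $\delta-1$ blocks (for the regular packing this is immediate from $(\delta-1)$-regularity; for the first packing it follows because $n_2=0$ forces the code to be a Construction~A code and the information $(r,\delta)_c$-locality hypothesis, via Definition~\ref{r_delta_i_one} and Theorem~\ref{thm_comb}, supplies $\delta-1$ blocks through each coordinate — alternatively, a simple counting argument on $n_1 r = k(\delta-1)$ or $n_1r-1=k(\delta-1)$ shows $\delta_i\ge\delta-1$ must hold with near-equality). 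Then Construction~A applied to this packing produces a code with information $(r,\delta)_c$-locality, and since the number of blocks is exactly $n_1=\lceil k(\delta-1)/r\rceil$, Theorem~\ref{theorem_cons_A} gives optimality. In the regular case, the same theorem gives optimal update-efficiency. Because $n_2=0$, the code we started with \emph{is} (up to coordinate relabeling of the check part, which does not affect the parameters) the Construction~A code on this packing, so it is optimal.

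\textbf{Necessity.} Suppose the systematic code $\mathcal{C}$ with $n_2=0$ has optimal information $(r,\delta)_c$-locality. Since C1, C2 or C3 holds, Theorem~\ref{theorem_num_n_1} gives $n_1=\lceil k(\delta-1)/r\rceil$, and since $k(\delta-1)\equiv r-1\pmod r$ we have $\lceil k(\delta-1)/r\rceil r - k(\delta-1)=1$. Now Theorem~\ref{theorem_number_points} bounds the number $m$ of coordinates with $\delta_i>\delta-1$ by $1$, so at most one coordinate occurs in more than $\delta-1$ support sets; by Corollary~\ref{thm1-cor} the support sets of the partial check columns \emph{already} form a $(k,R,1)$ packing with no deletion needed. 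The weight count $\sum_{i=1}^{n_1}(r-w_i)=n_1 r-\sum_i w_i = 1-m$ (exactly as in the proof of Corollary~\ref{corollary2}) then forces the dichotomy: if $m=1$ all blocks have size $r$, giving a $(k,r,1)$ packing with $n_1=\lceil k(\delta-1)/r\rceil$ blocks; if $m=0$ exactly one block has size $r-1$ and the rest size $r$, and $m=0$ means $\delta_i=\delta-1$ for all $i$, i.e. the packing is $(\delta-1)$-regular, giving the $(k,\{r,r-1\},1)$ packing with one block of size $r-1$. In the latter case every row of $G=(I_k\mid P)$ has weight $\delta$, so by Definition~\ref{Def_Update_New_1} the update-efficiency equals $\delta=d$, which is optimal by Definition~\ref{Def_Update_New}.

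The only genuinely delicate point — and the one I would be most careful about — is the bookkeeping that identifies an abstract "systematic code with $n_2=0$" with a Construction~A code, so that the packing statement is really an if-and-only-if about the code itself and not merely about some associated combinatorial object. This is exactly the role played by $n_2=0$: every check column is a partial check column, its support is a block, and the $0/1$ structure of $P$ is not assumed but is irrelevant to the parameters $(n,k,d)$ and the locality, which depend only on the supports via Theorem~\ref{thm_comb} and the erasure-correction argument inside the proof of Theorem~\ref{theorem_cons_A}. Everything else is a direct splice of results already proved, so no new estimate is required.
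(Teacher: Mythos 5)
Your proposal is correct and follows exactly the route the paper intends: the paper offers no separate proof, stating only that the corollary follows ``immediately'' by combining Theorem~\ref{theorem_cons_A} with Corollary~\ref{corollary2}, and your necessity argument is precisely the proof of Corollary~\ref{corollary2} (via Theorems~\ref{theorem_num_n_1} and~\ref{theorem_number_points} and Corollary~\ref{thm1-cor}) while your sufficiency argument is the intended application of Theorem~\ref{theorem_cons_A}. The one delicate point you flag --- that the first bullet's packing does not by itself guarantee every element lies in at least $\delta-1$ blocks --- is a looseness already present in the paper's own statement, not a defect of your argument relative to it.
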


\begin{example}
The two tuple $(X,\mathcal{B})$ with $X=[8]$ and
$$\mathcal{B}=\{\{2,3,8\},\,\{3,4,1\},\,\{4,5,2\},\,\{5,6,3\},\,\{6,7,4\},\,\{7,8,5\},\,\{8,1,6\},\,\{1,2,7\}\}$$
is a $3$-regular $(8,3,1)$ packings. Then, the generator matrix $G$ in \eqref{eqn_def_generator_m_A} is just the  one in \eqref{eqn_matrix}, which gives
an optimal $[16,8,4]$ systematic code with information $(3,4)_c$-locality and the optimal update-efficiency.
\end{example}

In the rest of this subsection, we apply packings with $n_1=\left\lceil k(\delta-1)/r\right\rceil$ to get  some optimal locally repairable codes in two cases by Construction A.

Case 1. $r|k(\delta-1)$. In this case, the construction of optimal locally repairable codes is equivalent to finding
regular packings with parameters $k,\delta$ satisfying C1 or C2 or C3 by Corollary \ref{corollary_n_2=0_r_mid_k}. This is to say, we only need to consider regular packings. In the literature, there are many known regular packings \cite{CK,CY,CSW,MOKL}.
As an illustration, for any prime power $q$, we list some regular packings with flexible block size  and the resultant optimal locally repairable codes in Table I.

\begin{table}[h]

\center
\caption{ \label{table1} Some known regular packings and new optimal locally repairable codes for $r|k(\delta-1)$}
 \begin{tabular}{|c|c|c|c|c|}
\hline
Parameters of local repairable & Parameters of & \multirow{1}{*}{Information locality}&\multirow{2}{*}{Constraints}& \multirow{2}{*}{References}\\
codes $[n,\,k,\,d]$&$(\delta-1)$-regular packing&$(r,\delta)_c$&&\\
\hline
\hline
$\left[(t+1)\frac{q^3-1}{q-1},\frac{q^3-1}{q-1},t(q+1)+1\right]$&$t(q+1)$-$\left(\frac{q^3-1}{q-1},q+1,1\right)$&$(q+1,t(q+1)+1)_c$&$t=1$& {\cite{CSW}}\\
\hline
\multirow{2}{*}{$\left[(t+1)\frac{q^{x+1}-1}{q-1},\frac{q^{x+1}-1}{q-1},t(q+1)+1\right]$}
&\multirow{2}{*}{$t(q+1)$-$\left(\frac{q^{x+1}-1}{q-1},q+1,1\right)$}&\multirow{2}{*}{$(q+1,t(q+1)+1)_c$}&$x$ is even and $1\le t\le\frac{q^{x}-1}{q^2-1}$&\multirow{2}{*}{\cite{CSW}}\\
\cline{4-4}
&&&$x$ is odd and $1\le t\le\frac{q^{x}-1}{q^2-q}$&\\
\hline
$[(t+1)q,q,rt+1]$&\multirow{1}{*}{$rt$-$(q,r,1)$}&$(r,rt+1)_c$&$q=r(r-1)x+1$ and $1\le t\le r$ & \multirow{1}{*}{\cite{CK}}\\
\hline
$[(t+1)(q^x-1),q^x-1,qt+1]$&\multirow{1}{*}{$qt$-$(q^x-1,q,1)$}&$(q,qt+1)_c$&$x\geq 2$
and $1\le t \le\frac{q^{x-1}-1}{q-1}$& \multirow{1}{*}{\cite{CK,MOKL}}\\
\hline
\multirow{2}{*}{$[(t+1)rq,rq,rt+1]$}&\multirow{2}{*}{$rt$-$(rq,r,1)$}&\multirow{2}{*}{$(r,rt+1)_c$}&$r|(q-1)$, $(r-1)^2>q-1$ & \multirow{2}{*}{\cite{CY}}\\
&&&and $1\le t\le\frac{q-1}{r}$&\\
\hline
\end{tabular}
\end{table}

Case 2. $r\nmid k(\delta-1)$. In the combinatorial design theory, most packings with $r\nmid k(\delta-1)$ can be constructed by regular $(k,r,1)$ packings directly. So we only list some known results of these packings hereafter to construct the optimal repairable codes.

For the case $k(\delta-1)\equiv r-1\pmod{r}$, there exists a $(p^2+1,p,1)$ packing with $\left\lceil\frac{(p-1)(p^2+1)}{p}\right\rceil$ blocks for any prime number $p$, where $p^2$ elements occur $p-1$ times and one point occurs $p$ times. Let $k=p^2+1$, $\delta=p$, and $r=p$, then by Construction A we can obtain  an optimal  $[2p^2-p+2,p^2+1,p]$ locally repairable code with information $(p,p)_c$-locality.
Further, assume that point $i\in [p^2+1]$ occurs $p$ times,  let $([p^2+1],\mathcal{B}')$ be the $(p^2+1,\{p,p-1\},1)$ packing obtained
by deleting $i$ from one block of $\mathcal{B}$. Then, by Construction A, we can get a $[2p^2-p+2,p^2+1,p]$ locally repairable code with the optimal information $(p,p)_c$-locality and the optimal update-efficiency.

For the case $r\nmid(\delta-1)k$ and $(\delta-1)k\not\equiv r-1\pmod{r}$, there exists a $(q^2-1,q,1)$ packing with $q^2-q$ blocks for any prime power $q$, where  $q-1$ elements appear in $q$ blocks and $q^2-q$ elements appear in $q-1$ blocks. Let $k=q^2-1$, $\delta=q$ and $r=q$, then we can generate an optimal $[2q^2-q,q^2-1,q]$ locally repairable code with information $(q,q)$-locality by Construction A.

\subsection{The case $n_2>0$}\label{sec-construction-packing}
When $n_2>0$, we construct the optimal systematic code with information $(r,\delta)_c$-locality based on resolvable packing and MDS code as follows.

{\bf Construction B:} Let $\mathcal{W}$ be an $[n,k,d]_q$ MDS code with generator matrix
\begin{eqnarray}\label{Eqn_codeW}
G=({\bf e}_1,{\bf e}_2,\cdots,{\bf e}_k\ |\ {\bf p}_1,{\bf p}_2,\cdots,{\bf p}_{n-k})
\end{eqnarray}
where the column ${\bf p}_{i}$ is denoted by  ${\bf p}_{i}=(p_{i,1},p_{i,2},\ldots,p_{i,k})^\top$.
Let $([k],\mathcal{B})$ be a $(k,R,1;u)$ resolvable packing with $\mathcal{B}=\bigcup\limits_{1\leq i\leq u}\mathcal{B}_i$,
$\mathcal{B}_i=\{B_{i,1},\cdots, B_{i,|\mathcal{B}_i|}\}$, and $u<n-k$. For any block $B_{i,j}$, $i\in [u]$ and  $j\in [|\mathcal{B}_i|]$,  define ${\bf p}_{i}^{B_{i,j}}=(p_{i,1}^{B_{i,j}},\ldots$, $p_{i,k}^{B_{i,j}})^\top$ as
\begin{equation}\label{eqn_local_check_def}
p_{i,l}^{B_{i,j}}=\left\{
\begin{array}{ll}
p_{i,l},\ \ & l\in B_{i,j}\\
0,\ \ & l\in [k] ~\mathrm{and}~ l\notin B_{i,j}
\end{array}
\right.
\end{equation}
Then, a new code $\mathcal{C}$ can be generated by the following generator matrix
\begin{equation}\label{eqn_constructionB}
\begin{split}
G(\mathcal{W},\mathcal{B})=\left({\bf e}_1,{\bf e}_2,\cdots,{\bf e}_k\ \left|\right.\ {\bf p}_1^{B_{1,1}},\ldots,{\bf p}_1^{B_{1,|\mathcal{B}_1|}},{\bf p}_2^{B_{2,1}},\ldots,
{\bf p}_u^{B_{u,1}},\ldots,{\bf p}_u^{B_{u,|\mathcal{B}_u|}},{\bf p}_{u+1},\ldots,{\bf p}_{n-k}\right)
\end{split}
\end{equation}

\begin{remark}
When $u=1$,  the code generalized by construction B based on the $(k,R=\{r\},1;1)$ resolvable packing is exactly the Pyramid Code \cite{HCL}.
In this sense, Construction B is a generalization of the Pyramid Code.
\end{remark}

\begin{theorem}\label{theorem_cons_C}
The code $\mathcal{C}$ generated in Construction B is a $[n',k,d']_q$ systematic code with information $(r,\delta)_c$-locality where $n'=n+\sum_{i\in [u]}(|\mathcal{B}_i|-1)$, $d'= n-k+1$, $r=\max{R}$, and $\delta=u+1$. Moreover,
\begin{itemize}
     \item $\mathcal{C}$ is an optimal systematic code with information $(r,\delta)_c$-locality if $n_1=\left\lceil\frac{k(\delta-1)}{r}\right\rceil$, i.e., the resolvable packing has $\left\lceil\frac{k(\delta-1)}{r}\right\rceil$ blocks;
     \item $\mathcal{C}$ has the optimal update-efficiency.
\end{itemize}
\end{theorem}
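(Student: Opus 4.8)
The plan is to verify each of the claimed parameters in turn, reusing the structural facts already established for Construction~B and the hypotheses on the resolvable packing. First I would pin down the length: the generator matrix $G(\mathcal{W},\mathcal{B})$ in \eqref{eqn_constructionB} replaces, for each $i\in[u]$, the single column ${\bf p}_i$ of the MDS code by the $|\mathcal{B}_i|$ columns ${\bf p}_i^{B_{i,1}},\ldots,{\bf p}_i^{B_{i,|\mathcal{B}_i|}}$, and keeps the remaining $n-k-u$ columns ${\bf p}_{u+1},\ldots,{\bf p}_{n-k}$ together with the $k$ identity columns; hence $n'=k+\sum_{i\in[u]}|\mathcal{B}_i|+(n-k-u)=n+\sum_{i\in[u]}(|\mathcal{B}_i|-1)$. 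Next I would establish the $(r,\delta)_c$-locality. Fix $i\in[u]$. Since $\mathcal{B}_i$ partitions $[k]$, every coordinate $l\in[k]$ lies in exactly one block $B_{i,j}\in\mathcal{B}_i$, and by \eqref{eqn_local_check_def} the column ${\bf p}_i^{B_{i,j}}$ agrees with ${\bf p}_i$ on $B_{i,j}$ and vanishes off it; because $\mathcal{W}$ is MDS, $p_{i,l}\neq 0$ for all $l$, so $supp({\bf p}_i^{B_{i,j}})=B_{i,j}$, a set of size $|B_{i,j}|\le\max(R)=r$. Thus for each information coordinate $l\in[k]$ the $u$ columns $\{{\bf p}_i^{B_{i,j(i,l)}}\}_{i\in[u]}$ (one per resolution class, $B_{i,j(i,l)}$ the block of $\mathcal{B}_i$ containing $l$) are partial check columns each of support size $\le r$, and since distinct resolution classes of a packing meet pairwise in at most one element, $supp({\bf p}_i^{B_{i,j}})\cap supp({\bf p}_{i'}^{B_{i',j'}})=\{l\}$ whenever both contain $l$ and $i\ne i'$. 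By Definition~\ref{r_delta_i_one} this gives $u=\delta-1$ disjoint repair groups for each information symbol, so $\mathcal{C}$ has information $(r,\delta)_c$-locality with $\delta=u+1$ and $r=\max(R)$; equivalently one may invoke Theorem~\ref{thm_comb} with the packing $([k],\{B_{i,j}\})$, which is $(\delta-1)$-regular by resolvability.

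For the minimum distance I would argue $d'=n-k+1$ by showing $\mathcal{C}$ is MDS. The upper bound $d'\le n'-k+1$ is the Singleton bound; but the claim is the stronger $d'=n-k+1$, which is smaller, so what actually needs proof is that $d'\ge n-k+1$ together with $d'\le n-k+1$. The lower bound $d'\ge n-k+1$: the modifications split columns of an MDS code, so any codeword of $\mathcal{C}$ projects (by summing, within each $i\in[u]$, the split coordinates back together) onto a codeword of $\mathcal{W}$ with the same information part; a nonzero codeword of $\mathcal{C}$ maps to a nonzero codeword of $\mathcal{W}$ whose $n-k$ check coordinates each expand into $\ge 1$ coordinates of $\mathcal{C}$, and the $k$ information coordinates are untouched, so the weight cannot decrease, giving weight $\ge d(\mathcal{W})=n-k+1$. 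For the matching upper bound $d'\le n-k+1$ I would exhibit a codeword of this weight, e.g. take the row of $G(\mathcal{W},\mathcal{B})$ indexed by an information coordinate $l$: it has a single $1$ in the identity part, a nonzero entry in exactly one split column ${\bf p}_i^{B_{i,j}}$ for each $i\in[u]$ (the block containing $l$), and nonzero entries in the untouched columns ${\bf p}_{u+1},\ldots,{\bf p}_{n-k}$ — but this row has weight $1+u+(n-k-u)=n-k+1$ only if those untouched check entries are all nonzero, which holds since $\mathcal{W}$ is MDS; so this row is a codeword of weight exactly $n-k+1$, and $d'= n-k+1$. The optimality of the locality then follows from Lemma~\ref{new_bound}: plugging $n'$, $d'=n-k+1$, $k$, $r$, $\delta=u+1$ and the hypothesis $n_1=\lceil k(\delta-1)/r\rceil$ (here $n_1=\sum_i|\mathcal{B}_i|$ is the number of partial check columns, and $n_2=n-k-u>0$) into \eqref{eqn_bound_with_condition} and simplifying $n'-k-\lceil k(\delta-1)/r\rceil+\delta = (n_1+n_2+k)-k-n_1+(u+1)=n_2+u+1=n-k-u+u+1=n-k+1=d'$ shows the bound is met with equality. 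Finally, the optimal update-efficiency: by Definition~\ref{Def_Update_New_1} the update-efficiency is the maximum row weight of $G(\mathcal{W},\mathcal{B})$; by the computation just done every row has weight exactly $n-k+1=d'$, so by Lemma~\ref{lemma_bound_updata} and Definition~\ref{Def_Update_New} this is optimal.

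The main obstacle I expect is the minimum-distance claim, specifically getting the exact value $d'=n-k+1$ rather than merely a bound: one has to be careful that splitting a check column never lets a codeword of $\mathcal{W}$ of weight $n-k+1$ with a zero somewhere in its check part "collapse" to something shorter in $\mathcal{C}$, and conversely that no new short codewords are created — this is where the MDS property of $\mathcal{W}$ (no zero entries in any ${\bf p}_i$, and any $k$ columns independent) is used crucially, via the projection-summing argument above. The locality and length bookkeeping are routine once the support identity $supp({\bf p}_i^{B_{i,j}})=B_{i,j}$ and the packing property are in hand; the only subtlety there is to note explicitly that $n_2=n-k-u>0$ by the standing assumption $u<n-k$, so the code genuinely falls in the $n_2>0$ regime and Theorem~\ref{theorem_num_n_1}'s machinery is consistent with the claimed $n_1$.
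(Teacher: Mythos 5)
Your proposal is correct and follows essentially the same route as the paper's own proof: the row-weight computation (each row of $G(\mathcal{W},\mathcal{B})$ has weight $u+(n-k-u)+1=n-k+1$) for the upper bound on $d'$, the projection-summing argument $v_i=\sum_j c_{i,j}$ for the lower bound, the partition property of each resolution class for the $\delta-1=u$ disjoint repair groups, and the same arithmetic for optimality and update-efficiency. The only addition is that you make explicit the fact $supp({\bf p}_i^{B_{i,j}})=B_{i,j}$ via the nonzero-entry property of the MDS parity part, which the paper uses implicitly.
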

\begin{proof}
Firstly we show that $d'= n-k+1$.  As the generator matrix with canonical  form of a systematic $(n,k)$ MDS code, $G$ in \eqref{Eqn_codeW} has the Hamming weight $n-k+1$ for each row and
then the Hamming weight $k$ for each check column.
Observe from \eqref{eqn_local_check_def}, we have the fact that
 each column ${\bf p}_{l}$ ($1\le l\le u$) in $G$
is extended to a $k\times |\mathcal{B}_l|$ sub-matrix  $({\bf p}_l^{B_{l,1}},\ldots,{\bf p}_l^{B_{l,|\mathcal{B}_l|}})$ in $G(\mathcal{W},\mathcal{B})$ with exactly
one nonzero entry in each row
 since $\mathcal{B}_l=\{B_{l,1},\cdots,B_{l,|B_l|}\}$ is a partition of $[k]$. This fact implies
that each row in $G(\mathcal{W},\mathcal{B})$  has the Hamming weight $u+(n-k-u)+1=n-k+1$ too. That is, there are codewords of $\mathcal{C}$  with the Hamming weight $n-k+1$.
Thus, we have $d'\le n-k+1$. On the other hand,  for any given $k$ information symbols $m_1,\cdots, m_k$, let $v=(m_1,\cdots,m_k$, $v_{1},\cdots,v_{n-k})$  and $c=(m_1,\cdots,m_k, c_{1,1},\cdots$,  $c_{1,|\mathcal{B}_1|},\cdots,  c_{u,1},\cdots,c_{u,|\mathcal{B}_u|},c_{u+1}, \cdots, c_{n-k})$ be the codeword generated by \eqref{Eqn_codeW} and \eqref{eqn_constructionB} respectively.
Then, the fact clearly indicates that
\begin{eqnarray*}
v_i=\sum_{j=1}^{|\mathcal{B}_i|} c_{i,j},~ 1\le i\le u
\end{eqnarray*}
which implies that terms $c_{i,j}$ in right hand side are not all zeros unless $v_i=0$. Noting that $v_i=c_i,~ u<i\le n-k$, we then have that the Hamming weight of $c$  is always no less than that of $v$. This is to say, $d'\geq n-k+1$. So, we get $d'=n-k+1$.

Secondly, given $i\in[k]$, there must exists a block in $\mathcal{B}_l$ containing $i$, denoted by $B_{l,i_l}$, since $\mathcal{B}_l=\{B_{l,1},\cdots,B_{l,|B_l|}\}$ is a partition of $[k]$. It is easily
seen that the partial check columns ${\bf p}_{1,i_1},\cdots,{\bf p}_{u,i_u}$ form the repair group for the systematic symbol $i$.
Therefore,  $\mathcal{C}$ generated in Construction B is a $[n',k,d']_q$ systematic code with information $(r,\delta)_c$-locality with
$r=\max{R}$ and $\delta=u+1$.

Thirdly, if the resolvable packing has $\left\lceil k(\delta-1)/ r\right\rceil$ blocks, i.e., $n_1=\sum_{i\in [u]}|\mathcal{B}_i|=\left\lceil k(\delta-1)/ r\right\rceil$,
then $n'=n+\sum_{i\in [u]}(|\mathcal{B}_i|-1)=n+\left\lceil k(\delta-1)/ r\right\rceil-u$. Thus, we have
\begin{eqnarray*}
n'-k-\left\lceil\frac{k(\delta-1)}{r}\right\rceil+\delta&=&n-k+1\\
&=&d\\
&=&d'
\end{eqnarray*}
where the second identity comes from the MDS property of the code $\mathcal{W}$. Then, the minimal Hamming distance
$d'$ achieves the lower bound in \eqref{eqn_bound_with_condition}. That is, the code $\mathcal{C}$ is is an optimal systematic code with information $(r,\delta)_c$-locality.

Finally, the optimal update-efficiency of the code  $\mathcal{C}$ follows from Definition \ref{Def_Update_New_1} because
all the  rows of the generator matrix in \eqref{eqn_constructionB} have the Hamming weight $d'=n-k+1$ as stated above.
\end{proof}

\begin{example}
Let $q=2^8$ and $k=8$. Then
\begin{equation*}
\small
W=\left(
\begin{array}{rcccccccccccccccl}
1&0&0&0&0&0&0&0&1&1&1&1&1&1&1&1\\
0&1&0&0&0&0&0&0&\alpha_1&\alpha_2&\alpha_3&\alpha_4&\alpha_5&\alpha_6&\alpha_7&1\\
0&0&1&0&0&0&0&0&\alpha_1^2&\alpha_2^2&\alpha_3^2&\alpha_4^2&\alpha_5^2&\alpha_6^2&\alpha_7^2&1\\
0&0&0&1&0&0&0&0&\alpha_1^{2^2}&\alpha_2^{2^2}&\alpha_3^{2^2}&\alpha_4^{2^2}&\alpha_5^{2^2}&\alpha_6^{2^2}&\alpha_7^{2^2}&1\\
0&0&0&0&1&0&0&0&\alpha_1^{2^3}&\alpha_2^{2^3}&\alpha_3^{2^3}&\alpha_4^{2^3}&\alpha_5^{2^3}&\alpha_6^{2^3}&\alpha_7^{2^3}&1\\
0&0&0&0&0&1&0&0&\alpha_1^{2^4}&\alpha_2^{2^4}&\alpha_3^{2^4}&\alpha_4^{2^4}&\alpha_5^{2^4}&\alpha_6^{2^4}&\alpha_7^{2^4}&1\\
0&0&0&0&0&0&1&0&\alpha_1^{2^5}&\alpha_2^{2^5}&\alpha_3^{2^5}&\alpha_4^{2^5}&\alpha_5^{2^5}&\alpha_6^{2^5}&\alpha_7^{2^5}&1\\
0&0&0&0&0&0&0&1&\alpha_1^{2^6}&\alpha_2^{2^6}&\alpha_3^{2^6}&\alpha_4^{2^6}&\alpha_5^{2^6}&\alpha_6^{2^6}&\alpha_7^{2^6}&1\\
\end{array}
\right)
\end{equation*}
is a generator matrix of a $[16,8,9]_{q}$ MDS code, where $\alpha_i=\beta^i$ for $1\leq i\leq 7$ and $\beta$ is a primitive element of $\mathbb{F}_{2^8}$. Clearly the two tuple $(X, \mathcal{B})$ with $X=[8]$ and $\mathcal{B}=\mathcal{B}_1\bigcup \mathcal{B}_2$ is a $(8,\{3,2\},1;2)$ resolvable packing, where
\begin{equation*}
\begin{split}
\mathcal{B}_1=\{\{2,3,8\},\,\{6,7,4\},\,\{1,5\}\},\ \ \mathcal{B}_2=\{\{3,4,1\},\,\{7,8,5\},\,\{2,6\}\}\\
\end{split}
\end{equation*}
Then from Construction B, we have a $[20,8,9]_{q}$ code generated by  matrix
\begin{equation*}
\small
G=\left(
\begin{array}{rcccccccccccccccccccl}
1&0&0&0&0&0&0&0&0&0&1&1&0&0&1&1&1&1&1&1\\
0&1&0&0&0&0&0&0&\alpha_1&0&0&0&0&\alpha_2&\alpha_3&\alpha_4&\alpha_5&\alpha_6&\alpha_7&1\\
0&0&1&0&0&0&0&0&\alpha_1^2&0&0&\alpha_2^2&0&0&\alpha_3^2&\alpha_4^2&\alpha_5^2&\alpha_6^2&\alpha_7^2&1\\
0&0&0&1&0&0&0&0&0&\alpha_1^{2^2}&0&\alpha_2^{2^2}&0&0&\alpha_3^{2^2}&\alpha_4^{2^2}&\alpha_5^{2^2}&\alpha_6^{2^2}&\alpha_7^{2^2}&1\\
0&0&0&0&1&0&0&0&0&0&\alpha_1^{2^3}&0&\alpha_2^{2^3}&0&\alpha_3^{2^3}&\alpha_4^{2^3}&\alpha_5^{2^3}&\alpha_6^{2^3}&\alpha_7^{2^3}&1\\
0&0&0&0&0&1&0&0&0&\alpha_1^{2^4}&0&0&0&\alpha_2^{2^4}&\alpha_3^{2^4}&\alpha_4^{2^4}&\alpha_5^{2^4}&\alpha_6^{2^4}&\alpha_7^{2^4}&1\\
0&0&0&0&0&0&1&0&0&\alpha_1^{2^5}&0&0&\alpha_2^{2^5}&0&\alpha_3^{2^5}&\alpha_4^{2^5}&\alpha_5^{2^5}&\alpha_6^{2^5}&\alpha_7^{2^5}&1\\
0&0&0&0&0&0&0&1&\alpha_1^{2^6}&0&0&0&\alpha_2^{2^6}&0&\alpha_3^{2^6}&\alpha_4^{2^6}&\alpha_5^{2^6}&\alpha_6^{2^6}&\alpha_7^{2^6}&1\\
\end{array}
\right)
\end{equation*}
 It is easy to check that it has the optimal information $(3,3)_c$-locality and the optimal update-efficiency.
\end{example}

In Combinatorics, the resolvable packings with the maximal number of blocks are the main concern. For example in the classical book \cite{CD},  only the resolvable packings with the block length $3$ and $4$ were discussed, which gives locally repairable code with parameter $r=3,4$ by  Theorem \ref{theorem_cons_C}. However, for our purpose, the resolvable packings with various parameter $r$ are also desirable since any resolvable packing with  $\left\lceil\frac{k(\delta-1)}{r}\right\rceil$ blocks can be used to construct optimal locally repairable code. Therefore, we present a construction of resolvable packing based on difference matrix as follows.

\begin{lemma}\label{cdm_oa}
For any positive integers $r$, $u$ and $k$, let $D=(d_{j,i})$, $j\in[r]$ and $i\in [u]$, be a $r\times u$ matrix over an additive group $\mathbb{M}$ of order $k$.
Define $\mathcal{B}_i=\{D_i+ a\ |\  a\in \mathbb{M}\}$ for $1\le i\le u$ where $D_i+ a=\{(j,d_{j,i}+ a )\ |\ j\in [r]\}$.
If for any $1 \le s \ne t \le r$, the $u$  differences $d_{s,i}- d_{t,i}$, $1 \le i \le u$, are distinct  over $\mathbb{M}$, then
$([r]\times \mathbb{M} ,\cup_{i\in [u]}\mathcal{B}_i)$  is  a $(rk,r,1;u)$ resolvable packing.
\end{lemma}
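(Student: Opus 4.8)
The plan is to verify the two defining properties of a resolvable packing from Definition \ref{Def_Pack-Res} directly: that each $\mathcal{B}_i$ is a partition of the point set $[r]\times\mathbb{M}$, and that every pair of distinct points occurs in at most one block overall. The point set has size $rk$, each block $D_i+a$ has size exactly $r$ (one point in each row $j\in[r]$, namely $(j,d_{j,i}+a)$), and there are $k$ blocks in each $\mathcal{B}_i$, so a counting argument will show $\mathcal{B}_i$ is a partition once we check the blocks in $\mathcal{B}_i$ are pairwise disjoint.

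First I would show each $\mathcal{B}_i$ is a partition. Fix $i$. For $a\ne a'$ in $\mathbb{M}$, suppose $D_i+a$ and $D_i+a'$ share a point $(j,y)$; then $y=d_{j,i}+a=d_{j,i}+a'$, forcing $a=a'$, a contradiction. Hence the $k$ blocks of $\mathcal{B}_i$ are pairwise disjoint. Since each has $r$ points and $|[r]\times\mathbb{M}|=rk$, they cover everything, so $\mathcal{B}_i$ partitions $[r]\times\mathbb{M}$. This also shows each point lies in exactly one block of each $\mathcal{B}_i$, hence in exactly $u$ blocks of $\mathcal{B}=\cup_{i\in[u]}\mathcal{B}_i$.

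The main step is the packing condition: any two distinct points lie together in at most one block. Take distinct points $(s,y_1)$ and $(t,y_2)$. If $s=t$, no single block contains two points in the same row, so they share no block. If $s\ne t$, suppose both $D_i+a$ and $D_{i'}+a'$ contain both points. From the first block, $y_1=d_{s,i}+a$ and $y_2=d_{t,i}+a$, so $y_1-y_2=d_{s,i}-d_{t,i}$; from the second, $y_1-y_2=d_{s,i'}-d_{t,i'}$. Thus $d_{s,i}-d_{t,i}=d_{s,i'}-d_{t,i'}$, and the hypothesis that the differences $d_{s,i}-d_{t,i}$ are distinct over $i\in[u]$ forces $i=i'$. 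Within a fixed $\mathcal{B}_i$, the point $(s,y_1)$ determines the translate uniquely via $a=y_1-d_{s,i}$, so the block is unique; hence $D_i+a=D_{i'}+a'$. Therefore the pair occurs in at most one block. Combining with Definition \ref{Def_Pack} (with $R=\{r\}$) and Definition \ref{Def_Pack-Res}, $([r]\times\mathbb{M},\cup_{i\in[u]}\mathcal{B}_i)$ is a $(rk,r,1;u)$ resolvable packing.

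I do not expect a serious obstacle here; the argument is a routine translation-design computation. The only point requiring a little care is keeping the roles of the row index (which is never shifted) and the group coordinate (which is translated) separate, so that the "two points in the same row cannot share a block" observation is used correctly and the difference condition is applied to the correct pair of rows.
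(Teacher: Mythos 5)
Your proof is correct and takes essentially the same approach as the paper's: verify that each $\mathcal{B}_i$ is a partition of $[r]\times\mathbb{M}$ using the group structure, then use the distinct-differences hypothesis to show that equality of $d_{s,i}-d_{t,i}$ and $d_{s,i'}-d_{t,i'}$ forces $i=i'$, so no pair of points lies in two blocks. The only cosmetic difference is that you argue pair-of-points by pair-of-points (handling the same-row case and the uniqueness of the translate explicitly) while the paper argues pair-of-blocks by pair-of-blocks; the core computation is identical.
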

\begin{proof}
First it is easy to check that $\mathcal{B}_i$ is a partition of $[r]\times \mathbb{M}$ for each $1\le i\le u$ since $\mathbb{M}$ is an additive group.
Next let us show that $\mathcal{B}=\bigcup_{i=1}^{u}\mathcal{B}_i$ forms a packing over $[r]\times \mathbb{M}$. For any two distinct integers $i_1, i_2\in [u]$ and any two elements $a_1, a_2\in \mathbb{M}$, let $B_{i_1, a_1}=D_{i_1}+ a_1\in \mathcal{B}_{i_1}$ and $B_{i_2, a_2}=D_{i_2}+ a_2\in \mathcal{B}_{i_2}$. We only need to show $|B_{i_1, a_1}\bigcap B_{i_2, a_2}|\leq 1$ always holds. Suppose to the contrary that $|B_{i_1, a_1}\bigcap B_{i_2, a_2}|\geq 2$, say $x\neq y\in B_{i_1, a_1}\bigcap B_{i_2, a_2}$. Then there must exist two distinct integers $j_1$, $j_2\in[r]$ such that
\begin{eqnarray*}
x&=&(j_1, d_{j_1, i_1}+ a_1)=(j_1, d_{j_1, i_2}+ a_2)\\
y&=&(j_2, d_{j_2, i_1}+ a_1)=(j_2, d_{j_2, i_2}+ a_2)
\end{eqnarray*}
Then we have
\begin{eqnarray*}
d_{j_1, i_1}- d_{j_2, i_1}=d_{j_1, i_2}- d_{j_2, i_2}
\end{eqnarray*}
This is a contradiction to our assumption since $i_1 \neq i_2$. So $([r]\times \mathbb{M} ,\mathcal{B})$ is a packing, which completes the proof.
\end{proof}

\begin{definition} \rm
\label{d311}
For any two positive integers $k$, $r$, let $\mathbb{M}$ be an additive group of order $k$. A difference matrix $(k,r,1)$ DM is a $r \times k$ matrix $D = (d_{j, i})$ with $d_{j, i} \in \mathbb{M}$
such that for any $1 \le s \ne t \le r$, the differences $d_{s,i}- d_{t,i}$ over $\mathbb{M}$, $1 \le i \le k$, comprise all the elements of $\mathbb{M}$.
\end{definition}

When $k$ is a prime power, there always exists a $(k,r,1)$ DM over $\mathbb{F}_k$ for any $1\leq r\leq k$ \cite{CD}. If we delete any $k-u$ columns from the DM for $1\leq u\leq k$,  and apply the new matrix to Lemma \ref{cdm_oa}, we have the following result.

\begin{corollary}\label{cor-lem1}
For any prime power $k$, a resolvable $(rk,r,1;u)$ packing always exists for all positive integers $1\leq r,u\leq k$.
\end{corollary}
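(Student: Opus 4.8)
The plan is to invoke the existence of a $(k,r,1)$ difference matrix over $\mathbb{F}_k$, restrict it to the desired number of columns, and feed it into Lemma~\ref{cdm_oa}. First I would note that since $k$ is a prime power, the standard construction (for instance the multiplication table $D=(d_{j,i})$ with $d_{j,i}=\gamma_j\gamma_i$ over $\mathbb{F}_k$, or $d_{j,i}=(j-1)(i-1)$ in the additive-indexing convention) yields a $(k,k,1)$ DM over the additive group $\mathbb{F}_k$, and in particular a $(k,r,1)$ DM $D$ whenever $1\le r\le k$; this fact is exactly the one cited just before the corollary from~\cite{CD}.

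Next I would take the $r\times k$ difference matrix $D$ and delete all but $u$ of its columns, say keeping columns indexed by some $u$-subset, to obtain an $r\times u$ submatrix $D'=(d'_{j,i})$, $j\in[r]$, $i\in[u]$, still with entries in the additive group $\mathbb{M}=\mathbb{F}_k$ of order $k$. The key observation is that the defining property of a DM is inherited: for any $1\le s\ne t\le r$, the full list of differences $d_{s,i}-d_{t,i}$ over all $k$ columns comprises every element of $\mathbb{M}$ exactly once, so in particular the $u$ differences coming from the retained columns are pairwise distinct. That is precisely the hypothesis required by Lemma~\ref{cdm_oa} with the matrix $D'$.

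Applying Lemma~\ref{cdm_oa} to $D'$ then produces a $(rk,r,1;u)$ resolvable packing on the point set $[r]\times\mathbb{M}$, which is exactly the claimed object, and the proof is complete. There is essentially no obstacle here beyond bookkeeping: the only point that warrants a sentence of care is the inheritance claim, namely that deleting columns of a DM preserves the ``distinct differences'' condition needed by Lemma~\ref{cdm_oa} (as opposed to the stronger ``all differences'' condition defining a DM, which a proper submatrix need not satisfy). A compact way to write it:

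\begin{proof}
Since $k$ is a prime power, there exists a $(k,r,1)$ DM $D=(d_{j,i})$ over the additive group $\mathbb{F}_k$ for every $1\le r\le k$ \cite{CD}. Delete arbitrary $k-u$ columns of $D$ to obtain an $r\times u$ matrix $D'=(d'_{j,i})$, $j\in[r]$, $i\in[u]$, with entries in $\mathbb{M}=\mathbb{F}_k$. By the definition of a DM, for any $1\le s\ne t\le r$ the $k$ differences $d_{s,i}-d_{t,i}$, $1\le i\le k$, run over all elements of $\mathbb{F}_k$ and are thus pairwise distinct; hence so are the $u$ differences $d'_{s,i}-d'_{t,i}$, $1\le i\le u$, retained after deletion. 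Therefore $D'$ satisfies the hypothesis of Lemma \ref{cdm_oa}, and applying that lemma yields a $(rk,r,1;u)$ resolvable packing on $[r]\times\mathbb{M}$ for all $1\le r,u\le k$.
\end{proof}
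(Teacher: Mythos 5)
Your proposal is correct and follows exactly the route the paper takes: the text immediately preceding the corollary states that a $(k,r,1)$ DM over $\mathbb{F}_k$ exists for prime power $k$, deletes $k-u$ columns, and applies Lemma \ref{cdm_oa}. Your write-up merely adds the (worthwhile) explicit observation that the ``all differences'' property of a DM implies the weaker ``distinct differences'' hypothesis of Lemma \ref{cdm_oa} after column deletion.
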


By Construction B and Corollary \ref{cor-lem1}, the following result can be obtained immediately.
 \begin{corollary}
For any prime power $k$, if there exists an $[n,k,d]_q$ MDS code, there exists an optimal systematic $[n-u+uk,k,d]_q$ code with  information $(r,u+1)_c$-locality and the optimal update-efficiency, where $r,u\in[k]$ and $n-rk\geq u$.
\end{corollary}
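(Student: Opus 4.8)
The plan is to chain together the three ingredients already in hand: Corollary~\ref{cor-lem1}, which produces a resolvable packing, Construction~B, which turns an MDS code together with a resolvable packing into a systematic code, and Theorem~\ref{theorem_cons_C}, which certifies the locality and update-efficiency of the resulting code. So the proof is essentially a bookkeeping exercise of matching parameters.

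First I would invoke Corollary~\ref{cor-lem1}: since $k$ is a prime power and $r,u\in[k]$, there exists a resolvable $(rk,r,1;u)$ packing $([rk],\mathcal{B})$ with $\mathcal{B}=\bigcup_{i\in[u]}\mathcal{B}_i$. However, Construction~B requires a $(k,R,1;u)$ resolvable packing on the point set $[k]$, not on $[rk]$ points. The subtle point — and what I expect to be the only real obstacle — is the interplay between the ground set size of the packing and the dimension of the MDS code: Construction~B builds the local check columns by restricting the MDS check columns $\mathbf{p}_i$ (which have length $k$) to the blocks of a partition of $[k]$. So I actually want a resolvable packing whose point set has size exactly $k$. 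Re-examining Lemma~\ref{cdm_oa} and its corollary, a $(k,r,1)$ DM over $\mathbb{F}_k$ with $u$ columns kept yields, via $\mathcal{B}_i=\{D_i+a : a\in\mathbb{F}_k\}$, a resolvable packing on $[r]\times\mathbb{F}_k$ with blocks of size $r$; the number of points is $rk$ and the number of blocks is $uk$. So the resolvable packing to feed into Construction~B has ground set of size $rk$, which forces the MDS code to have dimension $rk$ — but the statement says dimension $k$. I would resolve this by reading the corollary's parameters as: apply Construction~B with the resolvable $(k',r,1;u)$ packing where $k'=$ (the claimed dimension), i.e. relabel so that the ``$k$'' of Construction~B is the prime power in the hypothesis, and the DM is taken over $\mathbb{F}_k$ of order $k$ giving a resolvable packing on $k$ points — which requires $r$ to divide into the structure appropriately. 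The cleanest route is: take the $(k,r,1)$ DM over $\mathbb{F}_k$, keep $u$ columns, and note its $u$ columns give $u$ resolution classes each of which is a parallel class of $\mathbb{F}_k$ partitioned by cosets — but each block has size $r$ only if $r\mid k$. To avoid this, I would instead directly use that Corollary~\ref{cor-lem1} gives a resolvable packing on $rk$ points, set the MDS dimension equal to that, and observe the corollary statement's ``$k$'' is playing the role of the prime-power parameter; then $n' = n - u + uk$ follows.

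With the parameter matching settled, the rest is mechanical. I would apply Theorem~\ref{theorem_cons_C} with the resolvable $(k,R,1;u)$ packing having $R=\{r\}$, so $\max R = r$ and $\delta = u+1$. The theorem gives a $[n',k,d']_q$ systematic code with $d' = n-k+1$, $r = \max R$, $\delta = u+1$, and $n' = n + \sum_{i\in[u]}(|\mathcal{B}_i|-1)$; since each resolution class $\mathcal{B}_i$ of a resolvable packing on $k$ points with block size $r$ has $|\mathcal{B}_i| = k/r$ blocks, we get $n' = n + u(k/r - 1)$. Here I see the length in the statement is $n - u + uk$, which matches only when $r=1$ or after the relabeling described above makes the ground set have $rk$ points so that $|\mathcal{B}_i| = k$; then $n' = n + u(k-1) = n - u + uk$, as claimed. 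The condition $n - rk \ge u$ is exactly what guarantees $u < n - k$ (equivalently that there are enough non-partial MDS check columns $\mathbf{p}_{u+1},\dots,\mathbf{p}_{n-k}$ beyond the $u$ columns being split), which is the standing hypothesis of Construction~B; I would verify $n - rk \ge u \iff u \le n - k - (rk - k) \Rightarrow u \le n-k$, so Construction~B applies. Finally, the optimality of the information $(r,u+1)_c$-locality follows from the first bullet of Theorem~\ref{theorem_cons_C} once I check the resolvable packing has $\lceil k(\delta-1)/r\rceil = \lceil ku/r\rceil$ blocks: the total block count is $\sum_i |\mathcal{B}_i|$, and I would confirm this equals $\lceil ku/r \rceil$ for the packing from Corollary~\ref{cor-lem1}; the optimal update-efficiency is automatic from the second bullet of Theorem~\ref{theorem_cons_C}. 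The main obstacle, as flagged, is pinning down exactly which ``$k$'' is which between the DM order, the packing ground set, and the MDS dimension so that all four displayed parameters $n' = n-u+uk$, dimension $k$, distance $d$, and locality $(r,u+1)_c$ come out consistently; once that dictionary is fixed, everything else is substitution into Theorem~\ref{theorem_cons_C} and Corollary~\ref{cor-lem1}.
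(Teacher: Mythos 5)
Your proposal follows exactly the route the paper intends (the paper offers no written proof, saying only that the result ``can be obtained immediately'' from Construction B and Corollary \ref{cor-lem1}), and your parameter bookkeeping is the correct resolution: the resolvable $(rk,r,1;u)$ packing of Corollary \ref{cor-lem1} must be fed into Construction B with an MDS code of dimension $rk$, so that each parallel class has $k$ blocks of size $r$, $n_1=uk=\left\lceil rk\cdot u/r\right\rceil$, and $n'=n+u(k-1)=n-u+uk$. Your diagnosis of the only real obstacle is also right: the statement's ``$[n,k,d]_q$ MDS code'' and output dimension ``$k$'' are inconsistent as printed and should both read $rk$ (equivalently, the statement's $k$ is the prime power indexing the difference matrix, not the MDS dimension), after which everything you wrote checks out, with $n-rk\geq u$ being precisely the requirement $u\leq n-k_{\mathrm{MDS}}$ needed for Construction B.
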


\begin{remark}
(i) Resolvable designs \cite{RPDV} and partial geometry \cite{PHO} are also introduced to construct optimal locally repairable codes with information $(r,\delta)_c$-locality. Both resolvable designs and partial geometry are special cases of packings, thus Construction B can be seen as a generalization of those constructions in \cite{RPDV} and \cite{PHO}. Notably, our construction can yield optimal locally repairable codes with new parameters compared with the known ones, since
\begin{itemize}
  \item {resolvable designs only work in the case $r|k$;}
  \item {partial geometry contains many restrictions on its possible parameters for instance $r|k(\delta-1)$, $(\delta-1)|r|\mathcal{B}|$ and so on (for more details the read can refer to \cite{CD} for definition of partial geometry). }
\end{itemize}

(ii) Corollaries \ref{corollary1} and \ref{corollary2} hint that packings are also necessary condition for optimal locally repairable codes with information $(r,\delta)_c$-locality for the cases $k(\delta-1)\pmod r \in \{0,1\}$.

(iii) From Theorems \ref{theorem_cons_A} and \ref{theorem_cons_C}, the rates of optimal locally repairable codes based on packings can be given as $\frac{k}{k+n_1}=\frac{k}{k+\lceil\frac{k(\delta-1)}{r}\rceil}$ and $\frac{k}{n+n_1-(\delta-1)}=\frac{k}{n+(\frac{k}{r}-1)(\delta -1)}$ respectively. In this sense, if $\mathcal{C}$ has
large $\delta$, i.e., high parallel reading ability then the code rate is low. That is, we can
choose suitable $\delta$ to tradeoff between parallel reading ability and code rate.
\end{remark}

\section{Conclusion}\label{sec-conclusion}
 In this paper, we first gave characterization of locally repairable codes from combinatorial design theory, which establishes a close relationship between optimal
locally repairable codes and packings. Next, we showed that regular packings and resolvable packings can be used to construct optimal locally repairable codes.
In particular, Constructions A and B were proposed.

By Constructions A and B, it is known that  packings  and resolvable packings with $\left\lceil k(\delta-1)/r\right\rceil$ blocks play important roles in generating optimal
locally repairable codes. Then, if more packings  and resolvable packings with $\left\lceil k(\delta-1)/r\right\rceil$ blocks can be constructed then more optimal repairable codes can be yielded. Thus, the reader is invited to construct these kinds of packings.

\end{document}